\def\BibTeX{{\rm B\kern-.05em{\sc i\kern-.025em b}\kern-.08em
    T\kern-.1667em\lower.7ex\hbox{E}\kern-.125emX}}
\patchcmd{\thebibliography}{\section*{\refname}}{}{}{}
\newcommand{\edge}[1]{\langle  #1 \rangle}
\newtheorem{theorem}{Theorem}
\newtheorem{lemma}[theorem]{Lemma}
\newtheorem{definition}{Definition}
\newtheorem{innercustomthm}{Theorem}
\newcommand\myassignment{\stackrel{\mathclap{\normalfont\mbox{$>$}}}{\longleftarrow}}
\DeclareMathOperator*{\argmax}{arg\,max}
\begin{document}

\title{Contour Algorithm for Connectivity}

\author{\IEEEauthorblockN{Zhihui Du, Oliver Alvarado Rodriguez,  Fuhuan Li, Mohammad Dindoost and David A. Bader}
\IEEEauthorblockA{\textit{Department of Data Science} \\
\textit{New Jersey Institute of Technology}\\
Newark, USA \\
\{zhihui.du,oaa9,fl28,md724,bader\}@njit.edu}
}

\maketitle

\begin{abstract}
Finding connected components in a graph is a fundamental problem in graph analysis. In this work, we present a novel minimum-mapping based \emph{Contour} algorithm to efficiently solve the connectivity problem. We prove that  the \emph{Contour} algorithm with two or higher order operators can identify all connected components of an undirected graph within $\mathcal{O}(\log d_{max})$ iterations, with each iteration involving $\mathcal{O}(m)$ work, where $d_{max}$ represents the largest diameter among all components in the given graph, and $m$ is the total number of edges in the graph. Importantly, each iteration is highly parallelizable, making use of the efficient minimum-mapping operator applied to all edges. To further enhance its practical performance, we optimize the \emph{Contour} algorithm through asynchronous updates, early convergence checking, eliminating atomic operations, and choosing more efficient mapping operators. 
Our implementation of the \emph{Contour} algorithm has been integrated into the open-source framework Arachne. Arachne extends Arkouda for large-scale interactive graph analytics, providing a Python API powered by the high-productivity parallel language Chapel. Experimental results on both real-world and synthetic graphs demonstrate the superior performance of our proposed \emph{Contour} algorithm compared to state-of-the-art large-scale parallel algorithm FastSV and the fastest shared memory algorithm ConnectIt. On average, \emph{Contour} achieves a speedup of 7.3x and 1.4x compared to FastSV and ConnectIt, respectively. All code for the \emph{Contour} algorithm and the Arachne framework is publicly available on GitHub~\footnote{https://github.com/Bears-R-Us/arkouda-njit}, ensuring transparency and reproducibility of our work.
\end{abstract}

\begin{IEEEkeywords}
connected components, graph analytics, big data, parallel algorithm
\end{IEEEkeywords}

\section{Introduction}
A graph is one of the fundamental mathematical structures used to model pairwise relations between abstract objects. Many problems in science, society, and economics can be modeled by graphs. The sizes of graph data collections continue to grow which makes the need for fast graph algorithms critical, especially under online and real-time scenarios.

Finding connected components \cite{blelloch2020parallelism,karger1992fast,grana2010optimized,chong1995finding}
is a fundamental problem in graph analytics and an important first step for other graph algorithms. Many graph algorithms are based on the assumption that we already know a graph's connected components. In this work, we focus on the connectivity of undirected graphs. The connected components problem can be expressed as assigning each vertex with a label. If two vertices are in the same component or there is a path between them, they will be marked with the same label. Otherwise, the vertices will be marked with different labels \cite{cormen2022introduction}.

There are three kinds of algorithms for identifying connected components of an undirected graph. The first is a graph traversal-based method. Breadth-First Search (\emph{BFS}) \cite{jain2017adaptive} and label propagation \cite{esfahani2021thrifty,stergiou2018shortcutting,raghavan2007near} are two typical examples. \emph{BFS} will search from a set of just visited vertices (current frontier, initially with one root vertex) and then extend to other unvisited vertices (next frontier) connected to visited vertices until all vertices are visited.
The basic idea of label propagation is that each vertex is initially assigned a unique label. Then, each vertex subsequently compares its label with the labels of its neighbors and updates its label to be the smallest among them. This process is repeated 
until no label can be updated.
There are many variants to improve the performance further. This method has high performance for low-diameter graphs. However, if a graph has a long diameter, a lot of time and iterations will be needed.

The second is the tree hooking-compressing-based method \cite{SHILOACH198257, krishnamurthy1997connected,azad2019lacc,zhang2020fastsv}. This kind of method will start by initializing all vertices as singletons. Then, some tree hooking operations are employed to merge smaller components into larger components. Compressing operations will reduce the tree's height until all vertices are directly connected to a root vertex. The major feature of such a method is formulating the discovery of a big component as a forest building.  Combining the tree hooking and compressing, a much smaller number of iterations will be needed to identify all the components, even if the given graph has a large diameter. 

The third is the union-find or disjoint set-based method \cite{dhulipala2020connectit,galil1991data,patwary2010experiments}. It models components as disjoint sets. The union operation will merge different sets and the find operation will return the representative member of a set. Unlike the previous two methods, union-find is not an iteration-based method. It can directly identify all connected components in one iteration of the tree-based method. However, for large-scale parallelism scenarios, union-find methods often lead to an unbalanced workload that can significantly affect their performance. 

We abstract the connectivity as a contour lines discovery problem and develop simple and lightweight minimum-mapping operators to work on different edges to efficiently identify all the components in parallel. The minimum-mapping operator can map the connected vertices to the same contour line. Identifying one component is similar to identifying one contour line with the same minimum mapping label. Therefore, we name our algorithm ``\emph{Contour}"\cite{courant1996mathematics}. The minimum-mapping operator can be employed on different edges in parallel with high efficiency. Compared to tree hooking-compressing or union-find-based methods, this can significantly improve parallel performance and simplify implementation. 

The major contributions of this work are as follows.
\begin{enumerate}
    \item A novel \emph{Contour} algorithm that formulates finding connected components as a contour lines discovery problem. Based on this perspective, simple and lightweight minimum-mapping operators are developed to map the vertices in the same component to the same label in parallel. The proposed method is suitable for large graphs with different graph topologies. 
    \item A proof is given to show that for a graph with $d_{max}$ as its maximum diameter, the \emph{Contour} algorithm can converge in $\mathcal{O}(log (d_{max}))$ iterations. 
    \item The proposed method has been integrated into the graph package, Arachne. It is publicly available through the open-source Arkouda framework from GitHub to analyze large graphs using the popular Python interface.
    \item Extensive experimental results show that the proposed \emph{Contour} algorithm can achieve significant speedup compared to state-of-the-art real-world and synthetic graphs methods.
\end{enumerate}

\section{Contour Algorithm}
\subsection{Problem Description}
Given an undirected graph $G=<V,E>$, where $V$ is the set of vertices, and $E$ is the set of edges. Let $m=|E|$ be the total number of edges and $n=|V|$ be the total number of vertices in $G$. Without loss of generality, here we assume that vertex IDs are from $0$ to $n-1$.  

A label array $L[0..n-1]$ with size $n$ can be used to store all the labels of different vertices. Initially, we assign each vertex's ID as its label. Identifying all connected components in $G$ means that we will assign the vertices of the same components with the same vertex label. The label array is also regarded as a pointer graph \cite{SHILOACH198257}. $\forall v \in V, L[i]=v$ means that there is a direct edge from vertex $i$ to $v$. The pointer graph will be updated after each iteration. It is a forest of rooted trees plus self-loops that occur only in the root.  Finally, if graph $G$ has $S$ components, L will represent $S$ stars after all components are found. A star here is a unique type of graph characterized by a single root vertex connected to all other vertices, with no additional edges present.

\subsection{Minimum-Mapping Operator} \label{subsec:operator}
 $\forall v \in V$, $L[v]$ is the mapped vertex or label of $v$. $L_u[0..n-1]$ is used to store the updated value of different vertices after once iteration. If there is a path between $w$ and $v$ or $w$ and $v$ are connected, and the values of their labels are different, we should assign them the same label. Here we use the minimum value among $L[w]$ and $L[v]$ to update the old label values in $L_u$ array.

First, we define the conditional vector assignment operator as follows.
\begin{definition}[Conditional Vector Assignment]\label{definition:assignment}
 \begin{align}
     \begin{bmatrix}
           x_1 \\
           ...\\
           x_k \\
     \end{bmatrix}
         \myassignment z.
  \end{align}
It means that given a vector $X=$ 
$\begin{bmatrix}
           x_1 \\
           ...\\
           x_k \\
\end{bmatrix}$ $, \forall i \in \mathbb{N}, 1 \le i \le k$,  $x_i=z$ if $x_i >z$.

\end{definition}

Based on the definition of conditional vector assignment, we will further define our minimum-mapping operators. 
\begin{definition}[One-Order Minimum-Mapping Operator]
Given two connected vertices $w,v \in V$, let $z^1=min(L[w],L[v])$. We define the one-order minimum-mapping operator as follows.
 \begin{align}
    MM^1(L_u, L,w,v): & \begin{bmatrix}
           L_u[w] \\
           L_u[v] \\
         \end{bmatrix}
         \myassignment 
           z^1
  \end{align}
  \end{definition}
$MM^1(L_u,L,w,v)$ means that before the mapping operator, $L_u=L$. After employing the mapping operator, $L_u[w]$ and  $L_u[v]$ will be updated if either of them is larger than $z^1$. 

Higher \emph{h$>$1} order minimum-mapping operators $MM^h(L_u,L,w,v)$ can also be defined similarly.
\begin{definition}[$h$-Order Minimum-Mapping Operator]
Given two connected vertices $w,v \in V$, let  $z^h=min(L^h[w],L^h[v])$, where $\forall x \in V, L^{h}[x]=L[L^{h-1}[x]], L^1[x]=L[x]$. We define the h-order minimum-mapping operator as follows. 
\begin{align}
    MM^h(L_u,L,w,v): & \begin{bmatrix}
           L_u[w] \\
           L_u[v] \\
           ...\\
           L_u[L^{h-1}[w]]\\
           L_u[L^{h-1}[v]] \\
         \end{bmatrix}
         \myassignment  z^h.
\end{align}
\end{definition}

A higher-order minimum-mapping operator may include more mapped vertices based on the two given vertices. So it may find the final minimum contour quickly. However, it will also perform many more operations. In this paper, we take the two-order minimum-mapping operator as the default operator because it can achieve a quick convergence (logarithmic time complexity) with a minimum-mapping operator involving a much smaller number of vertices and operations. We will also show the different effects of its variants and combination patterns in Section \ref{sec:experients}. 

\subsection{Algorithm Description}
Based on the proposed minimum-mapping operator in Section~\ref{subsec:operator}, our \emph{Contour} algorithm is given in Alg. \ref{alg:connectedcomponents}. The complete algorithm is straightforward and easy to parallelize. 

For lines from 1 to 4, we initialize the label array $L$ and the corresponding update array $L_u$ with each vertex's ID. From line 5 to line 10, we update the label array $L$ until convergence or there are no changes in the array. From lines 6 to 8, for each edge $e=\edge{w,v} \in E$, we will execute the two-order minimum-mapping $MM^2(w,v)$ in parallel. $MM^2(w,v)$ may update the value of $L_u[w], L_u[v], L_u[L[w]], L_u[L[v]]$ if they are larger than the minimum value $z^2$. In line 9, all the old values in $L$ will be updated with the new values in $L_u$.

Since all the conditional assignments can be executed in parallel, to avoid write races, we can use the atomic compare-and-swap (CAS) \footnote{https://chapel-lang.org/docs/primers/atomics.html} operation to implement our conditional assignment as follows. 
\begin{equation} \label{eq:cas}
\begin{aligned}
&while \;(oldx_i=atomic\_read(x_i) > z) \; \{\\
&\qquad \;CAS(x_i,oldx_i,z)&\\
&\}
\end{aligned}
\end{equation}

\RestyleAlgo{ruled}
\begin{algorithm}[tpbh]
\small
\DontPrintSemicolon
\LinesNumbered
\caption{Minimum-Mapping based \emph{Contour} Algorithm}
\label{alg:connectedcomponents}
\SetKwRepeat{Do}{do}{while}
$Contour(G)$\\
\tcc{$G=\edge{E,V}$ is the input graph with edge set $E$ and vertex set $V$.}
\ForAll {i in 0..n-1} {
    $L[i]=i$\;
    $L_u[i]=i$\;
}
\tcc{Initialize the label array $L,L_u$}
\Do {(There is any label change in $L$)}{
    \ForAll  {($e=\edge{w,v} \in E$)} {
        $MM^2(L_u,L,w,v)$ \;
    }
    $L=L_u$\;
}
\Return $L$ \;
\end{algorithm}
Let $L_k^h[x]$ be the label of vertex $x$ employing the $h$-order minimum-mapping operator after the $k^{th}$ iteration, then  $L_{u,k}[w]=min(L_{k-1}^2[w],L_{k-1}^2[v_1],L_{k-1}^2[v_2],...,$ $L_{k-1}^2[v_m]$, where   $v_1,v_2,...,v_m$ are the vertices that directly connect with $w$, or directly connect with the vertices that are mapped to $w$.  
We give the following definition to show how the vertices in the same component are mapped to the same minimum label step by step. 
\begin{definition}[Equal Minimum Set]
Given label $x$, after the $k^{th}$ iteration, its one-order equal minimum set $EMS(k)_x^1=\{v|\forall v \in V, L_k[v]=x\}$.  Its two-order equal minimum set $EMS(k)_x^2=\{v|\forall v \in V, L_k^2[v]=x\}$.  
\end{definition}
We use the equal minimum set to indicate the vertices mapped to the same vertex label. 
\begin{definition}[Merged Minimum Set]
Let  $MMS(0)=V$. After the $k^{th}$ iteration, $k\geq 1$, the one-order merged minimum set is defined as $MMS(k)^1=\{v|\forall v \in V, EMS(k)_v^1 \neq \phi \}$. Similarly, the two-order merged minimum set $MMS(k)^2=\{v|\forall v \in V, EMS(k)_v^2 \neq \phi \}$.  
\end{definition}
From the definition, we can see that for $k \ge 0, MMS(k)^1 \supseteq MMS(k)^2 \supseteq MMS(k+1)^1 $. In other words, the merged minimum set's size will become smaller until it only contains the minimum vertices of different connected components. 

\begin{definition}[Rooted Tree and its Neighbor]
After the $k^{th}$ iteration, $k\ge 1$, the root vertices of different root trees in the pointer graph $R(t)=\{v|v \in MMS(k) \wedge L_k[v]=v\}$. $R(t)$ is also called the root tree set of the pointer graph. $\forall v_1,v_2 \in R(t)$, if $\exists \edge{v_{1'},v_{2'}} \in E$,  and $v_{1'}$ belongs to in root tree $v_1$, $v_{2'}$ belongs to root tree $v_2$, then we call $v_1$ the neighbor of $v_2$, vice versa. 
\end{definition}
Our mapping operator has the following two effects on the rooted trees. (1) Compressing. If the original height of a rooted tree is $x$ and we employ $h$ order minimum mapping operator to it, its height will be reduced to no more than $\lfloor \frac{x+h-1}{h} \rfloor$. Every vertex in the rooted tree will point to its $h$ order father or the root. (2) Minimum Merging. Any vertex $v_m$ in one rooted tree may be merged into its neighbor rooted tree as the son of root or other vertices. At the same time, the subtree (if exists) with $v_m$ as its root will be compressed and merged into its neighbor-rooted tree. Both compressing and merging can happen at the same time in one minimum-mapping operation. Minimum merging is very flexible and different from the existing tree-hooking or set union methods. One rooted tree can merge part of another rooted tree instead of the complete rooted tree. At the same time, it is simple and easy to implement. 
\if 0
\begin{figure}
  \centering
  \begin{minipage}[htbp]{\linewidth}
    \centerline{\resizebox{0.99\linewidth}{!}{\includegraphics{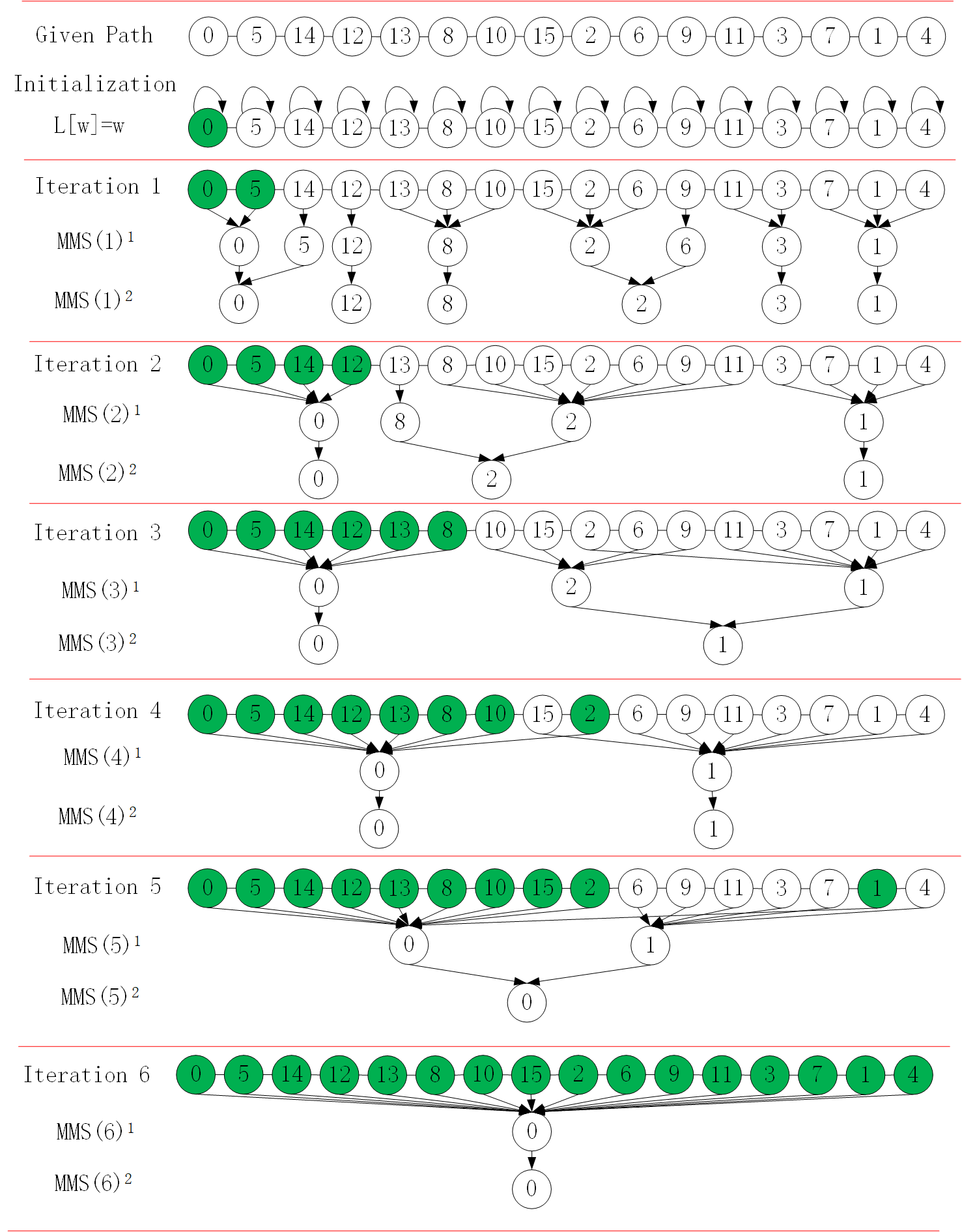}}}
    \caption{An example of the \emph{Contour} algorithm.}
    \label{fig:example}
  \end{minipage}
\end{figure}

Fig. \ref{fig:example} is an example to show how our \emph{Contour} algorithm works on the given graph path. The green vertices signify that they have been mapped to the minimum vertex after the current iteration. Initially, all vertices are mapped to themselves $L[w]=w$. After the first iteration, each vertex is mapped to its smaller neighbor or itself if it is smaller than all of its neighbors. After iteration 2, each vertex is further mapped to the smaller vertex ID that can be reached from the mapped vertices. For any edge $\edge{v,w}$, the related vertices $v, w, L[v], L[w]$ will be marked by $min(L_2^2[v],L_2^2[w])$. For example, we will show how different vertices will change their labels from iteration 3 to iteration 4. After iteration 3,  vertices $0,5,14,12,13,8$ are marked with $0$, or  $EMS(3)_0^1=\{0,5,14,12,13,8\}$, vertices $10,15,6,9$ are marked with $2$ or $EMS(3)_2^1=\{10,15,6,9\}$, vertices $2,11,3,7,1,4$ are marked with $1$ or $EMS(3)_1^1=\{2,11,3,7,1,4\}$; the merged minimum set  $MMS(3)^1=\{0,2,1\}$ and $MMS(3)^2=\{0,1\}$. After the iteration 4, edge $\edge{8,10}$ will cause vertices $10,2$ be marked with $0$, other vertices in $EMS(3)^2_1$ will be marked with $1$. From iteration 4 to iteration 5, $\edge{10,15}$ and $\edge{15,2}$ both will cause vertices $15, 1$ be marked as $0$. All the other vertices in $EMS(4)^2_1$ will still be marked as $1$.  From iteration 5 to iteration 6, all the vertices will be marked as $0$. We need only $6$ iterations to map all the vertices to the smallest vertex label. 
 \fi
 
 The framework of Alg. \ref{alg:connectedcomponents} has some similarities to label propagation or tree hooking-compressing. However, the label propagation method can be regarded as a special case of our method when the mapping order is one. Compared with the existing tree hooking-compressing methods, they only allow merging two rooted trees. However, our method can merge any part of two rooted trees.  
 
 The following section will prove that our Alg.\ref{alg:connectedcomponents} can converge in logarithmic iterations. 

\subsection{Time complexity analysis}

\begin{lemma}[Root Tree Constraint] \label{lemma:tree}
Let $P = \edge{s_0, \ldots, s_{n-1}}, n\ge 2$, be a path with $s_0$ as the smallest vertex, and consider running Alg. \ref{alg:connectedcomponents} on $P$ (here we assume the mapping operator can be employed up to twice in each iteration). After the $k^{th}$ iteration, let the root tree set be $R(k)$,  we have $(\frac{3}{2})^{k-1}\sum_{v \in R(k)}H_k(v)\le n-1$, where $H_k(v)$ is the height of root tree $v$ after the $k^{th}$ iteration. 
\end{lemma}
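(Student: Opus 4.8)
The plan is to prove the bound by induction on the iteration index $k$, the real content being a single-iteration potential-drop statement: writing $\Phi_k:=\sum_{v\in R(k)}H_k(v)$, one iteration of Alg.~\ref{alg:connectedcomponents} on a path — in which, as the lemma permits, the two-order operator is applied up to twice — shrinks $\Phi$ by a factor of at least $\tfrac32$, i.e.\ $\Phi_{k+1}\le\tfrac23\Phi_k$ as long as the pointer forest is not yet a single star. Granting this, the induction is immediate: the hypothesis $(\tfrac32)^{k-1}\Phi_k\le n-1$ gives $(\tfrac32)^{k}\Phi_{k+1}=\tfrac32(\tfrac32)^{k-1}\Phi_{k+1}\le(\tfrac32)^{k-1}\Phi_k\le n-1$, and once the forest is a single star the algorithm has converged, so nothing remains to prove. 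The base case $k=1$ is easy: after one iteration the pointer graph is a forest of rooted trees on $n$ vertices (trees plus self-loops at roots), hence has at most $n-1$ non-loop edges; a tree of height $h$ contains a root-to-leaf path with $h$ edges, and the trees are vertex-disjoint, so $\Phi_1$ is at most the number of non-loop edges, i.e.\ $\Phi_1\le n-1$.

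For the inductive step I would first record a structural invariant: run on the path $P$, the pointer forest after every iteration cuts $P$ into contiguous segments, and the root of the tree on a segment is the smallest vertex of that segment. This holds because the two-order operator applied to an edge $\edge{w,v}$ only ever writes labels that are $\ge\min(L^2[w],L^2[v])$, so a segment's minimum can be overtaken only through a cross-segment edge; those edges carry out (essentially) one min-propagation step on the sequence of segment minima, while the intra-segment edges only compress — here the compression bound stated above, height $x\mapsto\lfloor\frac{x+1}{2}\rfloor$ for the two-order operator, does the work, and the allowance to apply the operator up to twice per iteration is what lets the second pass finish propagating and re-compress whatever the first (hooking/merging) pass temporarily stretched.

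With this in hand, I would classify the post-iteration segments: call a segment \emph{isolated} if its minimum is a local minimum of the segment-minima sequence and both of its distance-two neighbours have smaller minima, so that its tree is only compressed and never merged; every non-isolated segment is consumed by a merge with an adjacent one. Two combinatorial facts then give $\Phi_{k+1}\le\tfrac23\Phi_k$: (i) isolated segments are pairwise at distance $\ge 3$ along the segment sequence and the two extreme segments are never isolated, so at most a $\tfrac13$ fraction of the segments are isolated; and (ii) a maximal run of $q$ consecutive non-isolated segments collapses, after the two operator passes, into at most $\lceil q/2\rceil$ segments whose total height is at most half of what they started with. The extremal configuration is three segments, the middle one isolated and the outer two merging into one, realising the factor $3\to2$ exactly, so a few small segment counts (say $t\le 4$) must also be checked by hand.

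The step I expect to be the main obstacle is fact (ii) — bounding the height of a tree born from a merge. A merge can momentarily produce a tree whose height is of order (sum of the absorbed heights) plus (length of the merged run of segments), and pinning down why two passes of the two-order operator bring that back down to at most $\tfrac23$ of the absorbed height requires tracking, pass by pass, the depths of the (few) vertices that change parents during hooking — the boundary vertices that ``defect'' into the smaller neighbouring segment, and the segment roots that get re-hooked. Finding an invariant under which this bookkeeping closes with the sharp constant $\tfrac32$, rather than merely yielding a weaker factor, is the crux; the edge cases ($k=1$, and small numbers of segments) are routine by comparison.
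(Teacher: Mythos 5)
Your skeleton is the same as the paper's: set $\Phi_k=\sum_{v\in R(k)}H_k(v)$, prove $\Phi_1\le n-1$, and show a per-iteration drop $\Phi_{t+1}\le\frac{2}{3}\Phi_t$ so that the induction closes. Your base case (the pointer graph after one iteration is a vertex-disjoint forest on $n$ vertices, so the heights are bounded by disjoint root-to-leaf paths and sum to at most the $n-1$ available edges) is fine, and arguably cleaner than the paper's. But the lemma lives entirely in the single-iteration drop, and that is exactly the step you leave open: you name your ``fact (ii)'' --- bounding the height of a tree born from a merge --- as the crux and do not prove it. A plan whose admitted obstacle is the only nontrivial claim does not establish the statement, so this is a genuine gap, not a finished proof. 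Your segment machinery also adds a difficulty the paper never faces: by classifying \emph{segments} as isolated or merging and bounding the fraction of isolated ones, you are implicitly trying to amortize across segments, whereas the required inequality can be verified tree-by-tree (or per merging pair), which is why the global $\frac{1}{3}$-fraction bookkeeping never needs to come up.

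The paper closes the step with a short case split on the height of each root tree. If $H_t(v)\ge 2$ and $v$ survives as a root, a single application of the two-order operator compresses height $x$ to at most $\lfloor\frac{x+1}{2}\rfloor\le\frac{2}{3}x$, so such trees lose a third of their height outright. The only trees that cannot compress are those of height exactly $1$, and on a path such a tree must interact with a neighboring root tree; the paper enumerates the three configurations --- two height-$1$ trees merging (combined height $1\le\frac{2}{3}\cdot 2$), a height-$1$ tree hooked into a taller neighbor ($H_{t+1}(v_m)\le\frac{2}{3}H_t(v_m)<\frac{2}{3}(H_t(v_m)+H_t(v))$), and a height-$1$ root absorbing part of a taller neighbor --- and only in the last case invokes the ``operator applied twice'' allowance, to drive the residual neighbor's height down far enough that the combined height is at most $\frac{2}{3}$ of the combined old heights. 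That enumeration is precisely the bookkeeping you defer, and it is where you would need to direct your effort; everything else in your proposal is already in place.
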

\begin{proof}
Let's do induction on $k$.

For $k=1$, if all vertices are in increasing order along the path $P$, then $MMS(1)^1=s_0$ and $H_1(s_0)=n-1$, $\sum_{v\in R(1)}H_1(v) = n-1$. So, the inequation holds. 

Otherwise, if there are multiple root trees $v_{m_1},...,v_{m_j}$ in the pointer graph and we let $n_s$ be the number of vertices in root tree $v_{m_s}$, where $1\le s \le j$. So, we have $\sum_{s=1}^{s=j} n_s=n$ and $H_1(v_{m_s}) \le n_s-1$. Therefore, $\sum_{s=1}^{s=j}H_1(v_{m_s})\le n-j <n-1$. So, the inequality holds for the base case.

Let $t\ge 1$; we assume that when $t=k$, the inequation holds. Now we prove when $k=t+1$, the inequation also holds. If $R(t)=\{s_0\}$ and $H_t(s_0)>1$, then after the $(t+1)$ iteration, $H_{t+1}(s_0)\le(\frac{2}{3})H_t(s_0)$, so the inequation holds. 

We discuss two cases if $|R(t)|>1$.

(1) If  $\forall v \in R(t)$, $H_t(v)>1 \wedge v \in R(t+1)$, then  $H_{t+1}(v)\le(\frac{2}{3})H_t(v)$. 

(2) If $\exists v\in R(t) \wedge H_t(v)=1 \wedge L_{k+1}[v]=v'\wedge v\ne v'$, then root tree $v$ will be merged into the root tree $v_m$ that contains vertex $v'$ after the $(t+1)^{th}$ iteration. If $H_k(v_m)=1$, then $H_{k+1}(v_m)=1 < \frac{2}{3}(H_k(v)+H_k(v_m))=\frac{2}{3}\times 2=\frac{4}{3}$. If $H_k(v_m)>1$, we know that $H_{t+1}(v_m)\le \frac{2}{3}H_t(v_m)< \frac{2}{3}(H_t(v_m)+H_t(v))$. 

If $\exists v\in R(t) \wedge H_t(v)=1 \wedge L_{k+1}[v]=v$, it means that $v$ is less than its neighbour vertex $v_x \in R(t)$. So, $v$ will merge its neighbor root tree or partial vertices of its neighbor root tree. Since merging the complete root tree is the same as in the above case, we only consider the case when only partial vertices are merged into $v$. In this case, the neighbor root tree $v_n$ must have $H_t(v_n)>1$. Otherwise, the neighbor root tree will be merged into the $v$ root tree. Here, we can employ the mapping operator twice. If  $H_{t+1}(v_n)=1$, then $H_{t+1}(v_n)+H_{t+1}(v)=2$. We have $ \frac{2}{3} (H_t(v_n)+H_t(v_n)) \ge \frac{2}{3} (1+2)=2$.   If  $H_{t+1}(v_n)>1$, then we have $H_{t+1}(v_n)\le \frac{1}{3}H_{t}(v_n)$ so $H_{t+1}(v_n)+1\le \frac{2}{3}(H_{t}(v_n)+1)$ when $H_{t+1}(v_n)> 1$.

Hence, considering all the cases, we also have the same conclusion.
\end{proof}

\begin{lemma}[Path Convergence] \label{lemma:path}
Let $P = \edge{s_0, \ldots, s_{n-1}}, n\ge 2$, be a path with $s_0$ as the smallest vertex, and consider running Alg. \ref{alg:connectedcomponents} on $P$. Marking all vertices as $s_0$ will need at most $\lceil \log_{\frac{3}{2}}(n-1)\rceil+1$ iterations.
\end{lemma}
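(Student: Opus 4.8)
The plan is to run Lemma~\ref{lemma:tree} as the engine. That lemma already bounds the weighted height sum $(\tfrac{3}{2})^{k-1}\sum_{v\in R(k)}H_k(v)$ by $n-1$, so it suffices to show that this sum stays at least $1$ as long as some vertex still fails to carry the label $s_0$; the exponential prefactor then forces termination within logarithmically many iterations. Concretely I would: (i) prove $\sum_{v\in R(k)}H_k(v)\ge 1$ for every $k\ge 1$; (ii) combine (i) with Lemma~\ref{lemma:tree} to obtain $k\le 1+\log_{3/2}(n-1)$ at every iteration at which the algorithm has not yet terminated; and (iii) convert this into the stated count by an elementary ceiling estimate.

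For (i) I would note that the first iteration applies $MM^2$ to the path edge $\edge{s_0,s_1}$, and since the labels are initialized to $L_0[x]=x$ we get $z^2=\min(L_0^2[s_0],L_0^2[s_1])=\min(s_0,s_1)=s_0$; hence $L_{u,1}[s_1]\le s_0$, and since no label can ever fall below the global minimum $s_0$, this forces $L_1[s_1]=s_0$. Because the conditional assignment only decreases entries of $L$, each $L_k[x]$ is non-increasing in $k$ and bounded below by $s_0$, so $L_k[s_0]=L_k[s_1]=s_0$ for all $k\ge 1$; thus $s_0\in R(k)$ and its rooted tree holds $s_1$ at depth one, giving $H_k(s_0)\ge 1$ and $\sum_{v\in R(k)}H_k(v)\ge 1$. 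The same monotonicity shows that once all labels equal $s_0$ they remain so, so ``converged after iteration $k$'' is a well-posed notion.

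For (ii)--(iii): if the algorithm has not converged after iteration $k$, then by (i) and Lemma~\ref{lemma:tree} we have $(\tfrac{3}{2})^{k-1}\le(\tfrac{3}{2})^{k-1}\sum_{v\in R(k)}H_k(v)\le n-1$, i.e.\ $k\le 1+\log_{3/2}(n-1)$; contrapositively, as soon as $k>1+\log_{3/2}(n-1)$ every label already equals $s_0$. For $n\ge 3$ we have $n-1\ge 2$, and no integer $\ge 2$ is a power $(\tfrac{3}{2})^j$ (for $j\ge 1$ this would force $2^j\mid 3^j$, and $(\tfrac{3}{2})^0=1$), so $\log_{3/2}(n-1)\notin\mathbb{Z}$ and $\lceil\log_{3/2}(n-1)\rceil>\log_{3/2}(n-1)$; hence $k=\lceil\log_{3/2}(n-1)\rceil+1$ satisfies $k>1+\log_{3/2}(n-1)$, which gives the bound. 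The case $n=2$ I would dispatch directly: the single iteration on $\edge{s_0,s_1}$ already sets $L_1[s_1]=s_0$, and $\lceil\log_{3/2}(1)\rceil+1=1$.

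I expect the only delicate points to be the interface with Lemma~\ref{lemma:tree} (using it only while the algorithm has not yet converged) and the ceiling/floor bookkeeping, including the $n=2$ corner, where the generic logarithmic estimate would be off by one iteration but the direct check still meets the stated bound; the ``$H_k(s_0)\ge 1$'' step itself is easy, resting only on the monotone and bounded-below behavior of $L$ and on the presence of the edge from $s_0$ to a neighbor. No idea beyond Lemma~\ref{lemma:tree} seems to be required.
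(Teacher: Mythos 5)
Your proof follows the same route as the paper: both invoke Lemma~\ref{lemma:tree}, lower-bound $\sum_{v\in R(k)}H_k(v)$ by $1$ before convergence, and solve $(\tfrac{3}{2})^{k-1}\le n-1$ for $k$. You supply details the paper leaves implicit (why the height sum is at least $1$, the non-integrality of $\log_{3/2}(n-1)$ for $n\ge 3$, and the $n=2$ corner case), but the underlying argument is identical.
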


\begin{proof}
Based on Lemma \ref{lemma:tree}, 
$k$ $\le$ $ \lceil \log_{\frac{3}{2}} {\frac{(n-1)}{\sum_{v \in R(k)}H_k(v)}} $$\rceil+1$, when $\sum_{v \in R(k)}H_k(v)=1$, the maximum value of $k$ should be $\lceil \log_{\frac{3}{2}} (n-1)\rceil +1$. So, after at most $\lceil \log_{\frac{3}{2}} (n-1) \rceil+1$ iterations, all vertices on $P$ will be  marked as $s_0$. 
\end{proof}

\begin{lemma}[Diameter Convergence] \label{lemma:diameter}
For a connected graph $G$ with diameter $d$, Alg. \ref{alg:connectedcomponents} will take at most $(\lceil \log_{\frac{3}{2}}(d)\rceil+1)$ iterations to spread the minimum vertex label to all the other vertices.  
\end{lemma}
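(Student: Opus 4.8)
The plan is to reduce the general connected-graph statement to the path case settled by Lemma~\ref{lemma:path}. Write $s_0 = \min_{v\in V} v$ for the smallest vertex ID in $G$. Since every value ever written into $L$ or $L_u$ by Alg.~\ref{alg:connectedcomponents} is itself a current label (hence a vertex ID in $V$), and since the conditional assignments only ever lower entries, a trivial induction on the iteration count gives $s_0 \le L_k[w]$ for all $w\in V$ and all $k$, and in particular $L_k[s_0]=s_0$ for every $k$. Thus ``spreading the minimum label to all vertices'' is the same as showing $L_k[w]=s_0$ for every $w$ once $k$ is large enough. Fix an arbitrary $w\in V$; because $G$ is connected with diameter $d$, there is a shortest path $P_w$ from $s_0$ to $w$ of length $\ell_w\le d$ (so with at most $d+1$ vertices), and all of its edges belong to $E$.

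The key auxiliary claim I would prove is a \emph{subgraph monotonicity} property: if $H$ is the spanning subgraph of $G$ whose only edges are those of $P_w$, then after every iteration $k$ one has $L_k^{G}[v]\le L_k^{H}[v]$ for all $v$, where the superscript records which edge set drives the updates. This goes by induction on $k$: the base case is the shared initialization $L_0[v]=v$, and the step expands the update rule $L_{u,k}[w]=\min\bigl(L_{k-1}^2[w],L_{k-1}^2[v_1],\dots,L_{k-1}^2[v_m]\bigr)$ over the relevant two-hop (through-pointer) neighborhoods, using that $H$ contributes a subset of the terms that $G$ contributes and that taking minima is monotone. Granting this, observe that in $H$ the connected component of $w$ is exactly the path $P_w$, on which $s_0$ is the smallest vertex, so Lemma~\ref{lemma:path} gives $L_k^{H}[w]=s_0$ for all $k\ge\lceil\log_{\frac{3}{2}}(\ell_w)\rceil+1$; combined with monotonicity and with $L_k^{G}[w]\ge s_0$ this yields $L_k^{G}[w]=s_0$ for all such $k$.

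To close the argument I would use $\ell_w\le d$ and monotonicity of $\log$ to get $\lceil\log_{\frac{3}{2}}(\ell_w)\rceil+1\le\lceil\log_{\frac{3}{2}}(d)\rceil+1$, a bound that does not depend on the choice of $w$. Hence after at most $\lceil\log_{\frac{3}{2}}(d)\rceil+1$ iterations of Alg.~\ref{alg:connectedcomponents} on $G$ we have $L[w]=s_0$ simultaneously for every $w\in V$, which is exactly the statement of the lemma (the degenerate case $w=s_0$, where $P_w$ is trivial, is immediate).

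The step I expect to be the main obstacle is the subgraph monotonicity claim, and specifically its interaction with the \emph{two-order} operator. A pointwise inequality $L_{k-1}^{G}\le L_{k-1}^{H}$ does not transfer directly to the composed quantity $L^2[x]=L[L[x]]$, because a label array need not be a monotone function of the vertex ID, so comparing $L_{k-1}^{G}[L_{k-1}^{G}[x]]$ with $L_{k-1}^{H}[L_{k-1}^{H}[x]]$ requires care. Making it rigorous needs an extra structural invariant maintained across iterations --- e.g.\ that the pointer array always points toward vertices of no larger ID, so that $L_{k-1}^2[x]\le L_{k-1}^1[x]\le x$ --- together with a case analysis showing that whenever an edge of $H$ lowers $L^{H}[v]$ to a value $z$, some edge of $G$ lowers $L^{G}[v]$ to at most $z$. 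Should that composition argument prove too delicate to state cleanly, the fallback is to bypass the subgraph reduction entirely and replay the amortized height-shrinking analysis of Lemma~\ref{lemma:tree} and Lemma~\ref{lemma:path} directly along the shortest path $P_w$ within the execution on $G$, arguing that the additional edges of $G$ only cause more merging along $P_w$ and never increase the height of any root tree sitting on $P_w$.
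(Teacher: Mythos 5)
Your reduction is the same one the paper uses: pick the globally smallest vertex $s_0$, take a shortest path $P_w$ from $s_0$ to each $w$ (length $\le d$), and invoke Lemma~\ref{lemma:path} on that path. The paper's own proof is exactly this and nothing more. Where you go further is in noticing that this reduction is not free: Lemma~\ref{lemma:path} is proved for running Alg.~\ref{alg:connectedcomponents} \emph{on} the path $P$ as the entire input, whereas here $P_w$ is embedded in $G$ and the pointer graph evolving on $G$ is driven by many edges not on $P_w$. You are right that this needs a domination argument, and right that it is the delicate step. Your worry about the two-order operator is well founded: pointwise $L^G_{k-1}\le L^H_{k-1}$ does \emph{not} give $L^G_{k-1}[L^G_{k-1}[x]]\le L^H_{k-1}[L^H_{k-1}[x]]$. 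Concretely, one can have $L^H[x]=5$, $L^H[5]=3$ (so $L^{H,2}[x]=3$) while $L^G[x]=4$ with $4$ a root in $G$'s pointer graph (so $L^{G,2}[x]=4>3$), even though every entry of $L^G$ is $\le$ the corresponding entry of $L^H$. So the ``key auxiliary claim'' as stated is not obviously true for the composed labels, and since the update rule is a minimum over exactly these composed quantities, the induction step of your monotonicity lemma does not close as written.

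As it stands, then, your proposal identifies the right obstacle but does not remove it; the proof is incomplete at the one point where it needed to add something beyond the paper. Your fallback --- replaying the height/merging analysis of Lemma~\ref{lemma:tree} directly along $P_w$ inside the execution on $G$, arguing that off-path edges only cause additional merging and never slow the shrinkage of the root trees restricted to $P_w$ --- is the more promising route, and is essentially what the paper implicitly assumes without saying so. To be fair to you, the paper's proof of this lemma silently performs the same unjustified transfer from ``path as input graph'' to ``path embedded in $G$,'' so the gap you have exposed is a gap in the paper as much as in your own write-up; but a complete proof would have to carry out the in-situ version of the amortized argument rather than cite Lemma~\ref{lemma:path} as a black box.
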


\begin{proof}
Let the smallest vertex in $G$ be $s_0$, then all shortest paths from $s_0$ to other vertices cannot be larger than $d$. Based on Lemma \ref{lemma:path}, the vertices on any shortest path from $s_0$ to other vertices can be mapped to $s_0$ within $(\lceil \log_{\frac{3}{2}}(d)\rceil+1)$ iterations. So the conclusion holds.
\end{proof}

\begin{innercustomthm}[Graph Convergence]\label{theorem:converge}
For any graph $G$, let $d_{max}$ be the maximum diameter of all graph $G$'s components. Alg. \ref{alg:connectedcomponents} will take at most $(\lceil \log_{\frac{3}{2}}(d_{max})\rceil+1)$ iterations to identify all the components.
\end{innercustomthm}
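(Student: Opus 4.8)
The plan is to reduce the general statement to the connected case already settled by Lemma~\ref{lemma:diameter}, using the fact that Alg.~\ref{alg:connectedcomponents} acts independently on each connected component. The first step I would carry out is to establish a \emph{locality invariant}: at the end of every iteration, for each vertex $v$ the value $L[v]$ is the ID of some vertex lying in the same connected component as $v$ (and likewise for $L_u$). This follows by induction on the iteration count. Initially $L[v]=v$. In an iteration, the only writes are of the form $L_u[\cdot]\myassignment z^2$ with $z^2=\min(L^2[w],L^2[v])$ for some edge $\edge{w,v}$; by the inductive hypothesis $L[w]$ and $L[v]$ lie in the component of $w=$ component of $v$, hence so do $L^2[w]=L[L[w]]$ and $L^2[v]=L[L[v]]$, and so does their minimum $z^2$. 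Moreover the four indices $w,v,L[w],L[v]$ that $MM^2$ touches all lie in that component. Consequently, the minimum-mapping operator applied to an edge of a component $C$ only reads and writes entries indexed by vertices of $C$, so the execution of the algorithm restricted to $C$ is exactly the execution of Alg.~\ref{alg:connectedcomponents} run on $C$ in isolation.

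Second, I would apply Lemma~\ref{lemma:diameter} componentwise. For a component $C$ with diameter $d_C$, after at most $\lceil \log_{\frac{3}{2}}(d_C)\rceil+1$ iterations every vertex of $C$ carries the label $s_0^C$, the smallest vertex ID in $C$. Since $d_C\le d_{max}$ for every component and $x\mapsto \lceil \log_{\frac{3}{2}}(x)\rceil$ is nondecreasing, this configuration is reached for \emph{every} component simultaneously within $\lceil \log_{\frac{3}{2}}(d_{max})\rceil+1$ iterations — the relevant count is the maximum over components because the loop in Alg.~\ref{alg:connectedcomponents} is global. Degenerate components (isolated vertices, or more generally components that are already stars from the start) are correctly labeled at initialization and therefore do not increase the bound; this is where I would absorb the small-diameter edge cases for which $\log_{\frac{3}{2}}(\cdot)$ is not meaningful.

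Third, I would verify that reaching this configuration both identifies all components and triggers termination. When $L[v]=s_0^C$ for every $v$ in every component $C$, the pointer graph is a disjoint union of stars, one per component, rooted at the respective minimum vertex; this is precisely a correct connected-components labeling. Finally one more pass changes nothing: for any edge $\edge{w,v}$, which lies inside some $C$, we have $L^2[w]=L[L[w]]=L[s_0^C]=s_0^C=L^2[v]$, so $z^2=s_0^C$ and every conditional assignment targets an entry already equal to $s_0^C$. Hence $L=L_u$, the loop's termination test fires, and the algorithm returns a correct labeling within $\lceil \log_{\frac{3}{2}}(d_{max})\rceil+1$ iterations.

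The main obstacle is the locality invariant in the first step: one has to be careful that the higher-order reads $L^2[\cdot]$ never leak a label from one component into another, since that is what legitimizes the componentwise application of Lemma~\ref{lemma:diameter}. Once that invariant is in hand, the theorem is a short monotonicity argument layered on top of Lemma~\ref{lemma:diameter}, plus the routine check that the all-stars configuration is a fixed point of the iteration.
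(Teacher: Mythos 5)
Your proposal is correct and follows essentially the same route as the paper's own proof: apply Lemma~\ref{lemma:diameter} to each connected component and use the fact that $d_C \le d_{max}$ to bound all components by the single global iteration count. The locality invariant and the fixed-point/termination check you add are left implicit in the paper's (much terser) argument, so your write-up is a more careful version of the same approach rather than a different one.
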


\begin{proof}
Let $G_c$ be any connected component of $G$ and $d$ be its diameter with $d\le d_{max}$. Based on Lemma \ref{lemma:diameter}, we know that after $(\lceil \log_{\frac{3}{2}}(d)\rceil+1)$ iterations, all vertices in $G_c$ will be mapped to their minimum vertices. Since $d_{max}$ is the maximum diameter of all graph $G$'s components, after $\lceil \log_{\frac{3}{2}}(d_{max})\rceil+1)$ iterations, all connected components of $G$ must have been mapped to their minimum vertices. So, Alg. \ref{alg:connectedcomponents} will take at most $\lceil \log_{\frac{3}{2}}(d_{max})\rceil+1)$ iterations to identify all the components.
\end{proof}

\section{Integration with Arachne and Performance Optimization}
\subsection{Integration Method}
Our method is integrated into Arachne \cite{2022-RDPLB}, a large-scale graph analytics package on top of Arkouda \cite{merrill2019arkouda,reus2020chiuw}. Arkouda is an open-source framework in Python created to be a NumPy replacement at scale. It replaces the \textit{ndarray} abstraction with the \textit{pdarray}. Our work aims to extend Arkouda for graph analytics, where we use the underlying \textit{pdarray} to implement and execute our algorithms. Through this, we create an end-to-end response system from Chapel to Arkouda. In Python, our calling method is called \textit{graph\_cc(graph)} where the user passes to a function a graph. We added this method to Arkouda's front-end file called \textit{graph.py}. The calling messages are added into \textit{arkouda\_server.chpl}. The Chapel method is invoked when the function is called in Python, and the messages are passed from Python to Chapel through ZMQ \footnote{https://zeromq.org/}. The messages are recognized at the back-end by \textit{arkouda\_server.chpl}, and the proper functions are invoked and executed in the chapel back-end.

\subsection{Algorithm Optimization}

Alg. \ref{alg:connectedcomponents} presents the fundamental concept of our method. However, we can further optimize it to enhance its practical performance when we integrate the method into Arachne.

\subsubsection{Asynchronous Update} The essence of the asynchronous \emph{Contour} algorithm is to update the label array $L$ immediately, eliminating the need for maintaining an update label array $L_u$. An asynchronous update will not affect the correctness or final convergence of the algorithm. However, the practical performance will be very different. This approach offers several advantages:

(1) Faster convergence speed: Vertices can be mapped to lower labels more rapidly.

(2) Reduction of unnecessary operations: The step $L=L_u$ in Alg. \ref{alg:connectedcomponents} becomes unnecessary and can be removed.

(3) Memory usage reduction: The $L_u$ array is no longer required and can be eliminated.

Experimental results in Section \ref{sec:experients} demonstrate that asynchronous updates significantly improve the performance of the algorithm.

\subsubsection{Early Convergence Check}

With the definition of our minimum mapping operator, for any edge $e=(v,w) \in E$, if $(L[v] \neq L^2[v] || L[w] \neq L^2[w] || L[v] \neq L[w])$, we need to continue to the next iteration. However, if these conditions are not met, even if there are updates in the current iteration, we can confidently conclude that the algorithm has converged, and we can exit the iteration directly. This early convergence check allows us to save additional iterations.

By performing this convergence check, we can efficiently terminate the algorithm once the convergence condition is met, reducing unnecessary computations and improving the overall efficiency of the algorithm.

\subsubsection{Eliminating Atomic Operations}

In union-find algorithms, atomic operations are essential to ensure correctness. However, in iteration-based methods, atomic operations can impact the number of iterations but not the correctness of the algorithm. These atomic operations can be computationally expensive compared to simple assignments.

Utilizing asynchronous updates can accelerate the convergence speed and reduce the total number of iterations. Consequently, we have the opportunity to replace costly atomic updates with simple assignments, further enhancing the practical performance of the algorithm.

By removing atomic operations and employing simple assignments, we can achieve better computational efficiency without compromising the correctness of the algorithm. It is similar to the effect of replacing synchronization updates with asynchronous updates. These optimizations contribute to the overall improvement in practical performance.

\subsubsection{Selecting Suitable Minimum Mapping Operators}

The choice of minimum mapping operators and their combination patterns can also significantly impact the performance of the algorithm for a given graph. We will provide recommendations based on experimental results in subsection \ref{subsec:speedup1}. Here, we introduce six different variants of our \emph{Contour} algorithm:

\emph{C-Syn}: This is the synchronous method described in Alg. \ref{alg:connectedcomponents} without employing any other optimization methods. Except for the minimum mapping operator, it is almost the same as the \emph{FastSV} algorithm. It can only achieve limited speedup compared with \emph{FastSV}.

\emph{C-1}: This variant employs the one-order minimum mapping operator.

\emph{C-2}: This variant employs the two-order minimum mapping operator.

\emph{C-m}: For large-diameter graphs, we may use a higher-order minimum mapping operator greater than two to reduce the total number of iterations.

\emph{C-11mm}: This variant combines operators. It starts with the one-order mapping operator for a few iterations and then switches to a higher-order operator until convergence.

\emph{C-1m1m}: This variant alternates between the one-order and higher-order operators until convergence.

In subsection \ref{subsec:speedup1}, we will provide specific analysis and guidance on selecting the most suitable variant based on the characteristics of the graph to achieve optimal performance.

\subsection{State-of-the-Art Algorithms}

In addition to our \emph{Contour} algorithm, we have incorporated two state-of-the-art algorithms, namely \emph{FastSV} \cite{zhang2020fastsv} and \emph{ConnectIt} \cite{dhulipala2020connectit}, into Arachne. This integration allows us to expand our algorithm repository and conduct performance comparisons.

The seminal Shiloach-Vishkin (SV) algorithm \cite{SHILOACH198257} is capable of identifying graph components in $\mathcal{O}(\log(n))$ time on a CRCW PRAM machine with $(m+n)$ parallel processors. Various variants and improvements \cite{1302951} based on the original SV algorithm exist, and \emph{FastSV} represents the latest and most advanced version for large-scale parallel systems. However, the synchronization introduced between different hooking and compressing steps, along with the need to assign the current label array $L$ with the updated label array $L_u$ before the next iteration, significantly affects its performance compared to our simple and flexible minimum mapping operators.

Union-find algorithms were developed to handle disjoint set data structures and can achieve almost linear time complexity \cite{tarjan1984worst}. Patwary \emph{et al.}'s \cite{patwary2010experiments} experimental results reveal that Rem's simple union-find algorithm delivers superior practical performance. Dhulipala \emph{et al.} developed the \emph{ConnectIt} framework, which incorporates hundreds of different connected components algorithms, and their experimental results on large graphs demonstrate that Rem's algorithm is the best within their shared memory system. In Arachne, we have integrated the optimal union-find algorithm from the ConnectIt framework for comparison. Our experimental results (see subsection \ref{subsec:speedup2}) demonstrate that our \emph{Contour} algorithm can effectively exploit parallel resources to achieve improved performance.

\section{Experiments} \label{sec:experients}

\subsection{Dataset Description}

Our dataset comprises a selection of publicly available synthetic and real-world datasets, sourced from reputable repositories. We have drawn graphs from the SuiteSparse Matrix Collection\footnote{https://sparse.tamu.edu/}, Stanford Large Network Dataset Collection\footnote{https://snap.stanford.edu/data/}, and the MIT GraphChallenge graph datasets\footnote{https://graphchallenge.mit.edu/data-sets}.

To ensure comprehensive testing and performance comparison, we have carefully chosen a combination of real-world and synthetic graphs, as outlined in Table \ref{tab:dataset2}. Including both types of graphs allows us to highlight and evaluate the efficacy of our \emph{Contour} algorithm compared with state-of-the-art algorithms.

The real-world graphs in our dataset may vary in characteristics. They typically exhibit degree distributions that follow a power-law distribution. These features provide a diverse set of challenges and scenarios to thoroughly assess the algorithms' performance.

Additionally, we have included a set of synthetic graphs known as \emph{Delaunay}. These graphs are constructed based on Delaunay triangulations of randomly generated points in the plane. Unlike graphs with power law distribution, \emph{Delaunay} graphs have vertices with degrees that are relatively close to each other. Including synthetic graphs allows us to observe how the algorithm's performance varies with graph size.

By incorporating a diverse range of datasets, our evaluation encompasses various real-world scenarios and provides valuable insights into the scalability and effectiveness of our \emph{Contour} algorithm, as well as its comparison with state-of-the-art methods.

\newsavebox{\boxDataSetb}
\begin{lrbox}{\boxDataSetb}
\begin{tabular}{|c|c|c|c|}
\hline
Graph Name&Graph ID&Number of Edges&Number of Vertices\\ \hline
ca-GrQc&0&28980&5242\\ \hline
ca-HepTh&1&51971&9877\\ \hline
facebook\_combined&2&88234&4039\\ \hline
wiki&3&103689&8277\\ \hline
as-caida20071105&4&106762&26475\\ \hline
ca-CondMat&5&186936&23133\\ \hline
ca-HepPh&6&237010&12008\\ \hline
email-Enron&7&367662&36692\\ \hline
ca-AstroPh&8&396160&18772\\ \hline
loc-brightkite\_edges&9&428156&58228\\ \hline
soc-Epinions1&10&508837&75879\\ \hline
com-dblp&11&1049866&317080\\ \hline
com-youtube&12&2987624&1134890\\ \hline
amazon0601&13&2443408&403394\\ \hline
soc-LiveJournal1&14&68993773&4847571\\ \hline
higgs-social\_network&15&14855842&456626\\ \hline
com-orkut&16&117185083&3072441\\ \hline
road\_usa&17&28854312&23947347\\ \hline
kmer\_A2a&18&180292586&170728175\\ \hline
kmer\_V1r&19&232705452&214005017\\ \hline
uk\_2002&20&298113762&18520486\\ \hline
delaunay\_n10&21&3056&1024\\ \hline
delaunay\_n11&22&6127&2048\\ \hline
delaunay\_n12&23&12264&4096\\ \hline
delaunay\_n13&24&24547&8192\\ \hline
delaunay\_n14&25&49122&16384\\ \hline
delaunay\_n15&26&98274&32768\\ \hline
delaunay\_n16&27&196575&65536\\ \hline
delaunay\_n17&28&393176&131072\\ \hline
delaunay\_n18&29&786396&262144\\ \hline
delaunay\_n19&30&1572823&524288\\ \hline
delaunay\_n20&31&3145686&1048576\\ \hline
delaunay\_n21&32&6291408&2097152\\ \hline
delaunay\_n22&33&12582869&4194304\\ \hline
delaunay\_n23&34&25165784&8388608\\ \hline
delaunay\_n24&35&50331601&16777216\\ \hline
\end{tabular}
\end{lrbox}

\begin{table}[thbp]
\centering
\caption{Real World and  Synthetic graphs}
\scalebox{0.65}{\usebox{\boxDataSetb}}
\label{tab:dataset2}
\end{table}

\subsection{Experimental Platform}\label{subsection:platform}
Experiments were done on a 32-node cluster system. Each node is a CentOS Linux release 7.9.2009 (Core) high-performance server with 2 x Intel Xeon E5-2650 v3 @ 2.30GHz CPUs with ten cores per CPU. Each server has 512GB of RAM. A high-performance Infiniband network system connects all nodes.

\subsection{Number of Iterations} \label{subsection:iteration}
\begin{figure*}
    \centering
    \hspace*{-2.6cm}
    \includegraphics[width=1.29\linewidth]{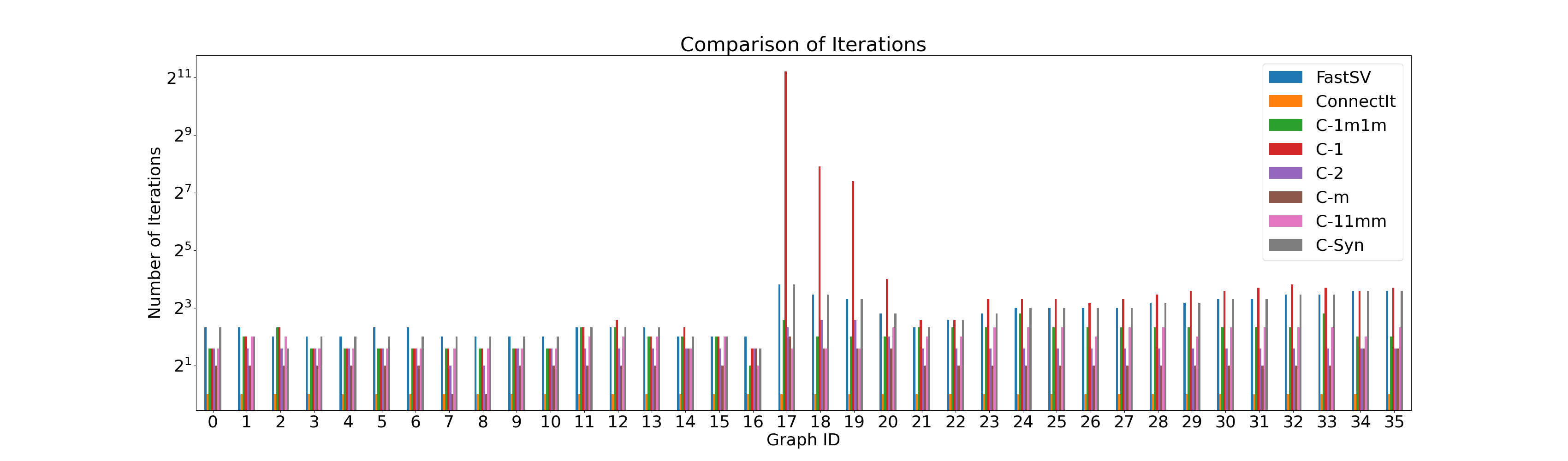}
    \caption{Number of Iterations of \emph{FastSV}, \emph{ConnectIt}, and Different \emph{Contour} Variants.}
    \label{fig:1LocaleIteNum}
\end{figure*}

In Fig. \ref{fig:1LocaleIteNum}, we observe that for different graphs, the \emph{C-1} operator consistently requires the largest total number of iterations. Notably, Graph 17 \emph{road\_usa} exhibits the highest iteration count at 2369 iterations. This behavior is expected as \emph{C-1} represents the lowest-order minimum mapping operator, only considering directly connected vertices or those within a search distance of 1.

Comparatively, \emph{C-2} performs significantly better than \emph{C-1} in iteration numbers, involving all vertices that are at a distance of 2 from each edge $e=(v,w)$. Consequently, even a minimal increase in the order of the minimum mapping operator leads to a significant reduction in the total number of iterations for long-diameter graphs. For instance, Graph 17 \emph{road\_usa} only requires 5 iterations when using \emph{C-2}.

Further increasing the minimum mapping order to \emph{C-m} (here $m=1024$) yields additional reductions in the total number of iterations, but the improvement is not as significant. Across all graphs, \emph{C-m} achieves a maximum reduction of 3 iterations compared to \emph{C-2}. Therefore, we observe the following relationship for the total number of iterations: \emph{Number of Iterations (C-m)} $\leq$ \emph{Number of Iterations (C-2)} $\leq$ \emph{Number of Iterations (C-1)}.

Next, we analyze the behaviors of the combined minimum mapping operators \emph{C-11mm} and \emph{1m1m}. Among the 38 graphs, the majority (21) exhibit the same number of iterations for both operators. For the remaining 13 graphs, \emph{1m1m} shows a slightly higher number of iterations than \emph{C-11mm}. Thus, \emph{C-11mm} generally demonstrates a slightly better performance than \emph{1m1m} in terms of iteration count. Additionally, \emph{C-11mm} exhibits a total iteration count that is close to \emph{C-2}.

Comparing \emph{C-Syn} with \emph{FastSV}, we find them to be quite similar in terms of the total number of iterations. However, \emph{C-Syn} possesses a more efficient and simplified minimum mapping operator, contributing to the slight advantage in iteration count over \emph{FastSV}. The optimized \emph{C-2} operator significantly reduces the number of iterations compared to \emph{C-Syn} in most cases, validating the effectiveness of our optimization in reducing iterations, as also reflected in subsection \ref{subsec:exetime}.

\emph{ConnectIt}, as a non-iteration-based method, requires one union operation on all edges and one compression operation on all vertices. Consequently, we assign the iteration count for \emph{ConnectIt} as 1 for all graphs.

In summary, the average number of iterations, from low to high, are as follows: \emph{C-m}=2.19, \emph{C-2}=3.19, \emph{C-11mm}=3.89, \emph{C-1m1m}=4.31, \emph{C-Syn}=6.83, \emph{FastSV}=6.97, \emph{C-1}=83.86.

\subsection{Execution Time} \label{subsec:exetime}
\begin{figure*}
    \centering
    \hspace*{-2.6cm}
    \includegraphics[width=1.29\linewidth]{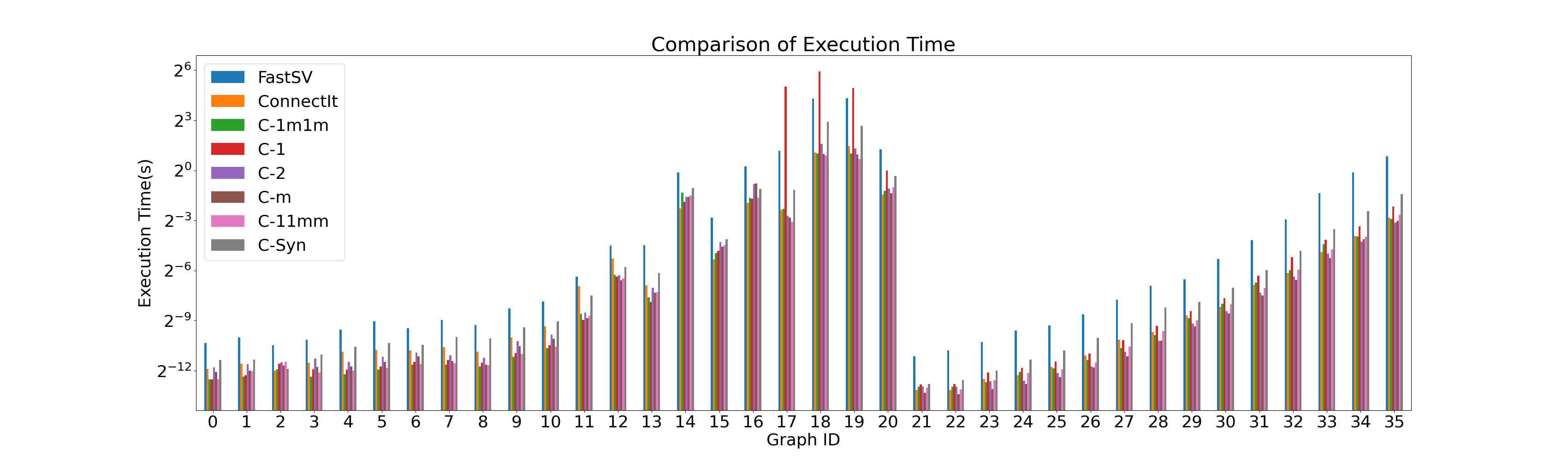}
    \caption{Execution Time of \emph{FastSV}, \emph{ConnectIt}, and Different \emph{Contour} Variants.}
    \label{fig:1LocaleExeTime}
    \vspace{-10.pt}
\end{figure*}
In Fig. \ref{fig:1LocaleExeTime}, we observe the execution times of different methods. Notably, there is a general trend that as the size of graphs increases (measured by the total number of edges and vertices), the execution time also increases. This pattern is expected since our server has a fixed number of 20 cores. As the graph size grows, each core has to handle a larger number of edges, leading to increased execution times. However, the execution times may vary due to differences in graph topology.

Analyzing the Delaunay graphs, which share similar topology, we find that as the graph size grows from \emph{delaunay\_n10} to \emph{delaunay\_n24} (both the number of edges and vertices increase about  16000 times), the execution time of \emph{C-2} increases by 895 times, \emph{C-1m1m} increases by 1072 times, \emph{C-m} increases by 1268 times, \emph{ConnectIt} increases by 1303 times, \emph{C-11mm} increases by 1329 times, \emph{C-Syn} increases by 2705 times, and \emph{FastSV} increases by 4096 times.

Additionally, we observe that, in most cases, \emph{FastSV} exhibits longer execution times compared to all other methods. Only when the diameters of some graphs are particularly large does the execution time of \emph{C-1} surpass \emph{FastSV}. Moreover, \emph{C-Syn} consistently shows longer execution times compared to other \emph{Contour} variants. As mentioned previously, this is due to \emph{C-Syn} employing synchronous updates instead of immediate asynchronous updates, which hinders the quick spreading of small labels to other vertices, thereby reducing its convergence speed.

In summary, the execution times of the algorithms generally follow the trend of increasing with graph size. However, specific algorithm characteristics, such as synchronous vs. asynchronous updates, also play a significant role in determining execution times. The overall performance of our \emph{Contour} algorithm outperforms \emph{FastSV}, highlighting the effectiveness of our optimization strategies in reducing execution times.

\subsection{Speedup compared with FastSV} \label{subsec:speedup1}
\begin{figure*}
    \centering
    \hspace*{-2.6cm}
    \includegraphics[width=1.29\linewidth]{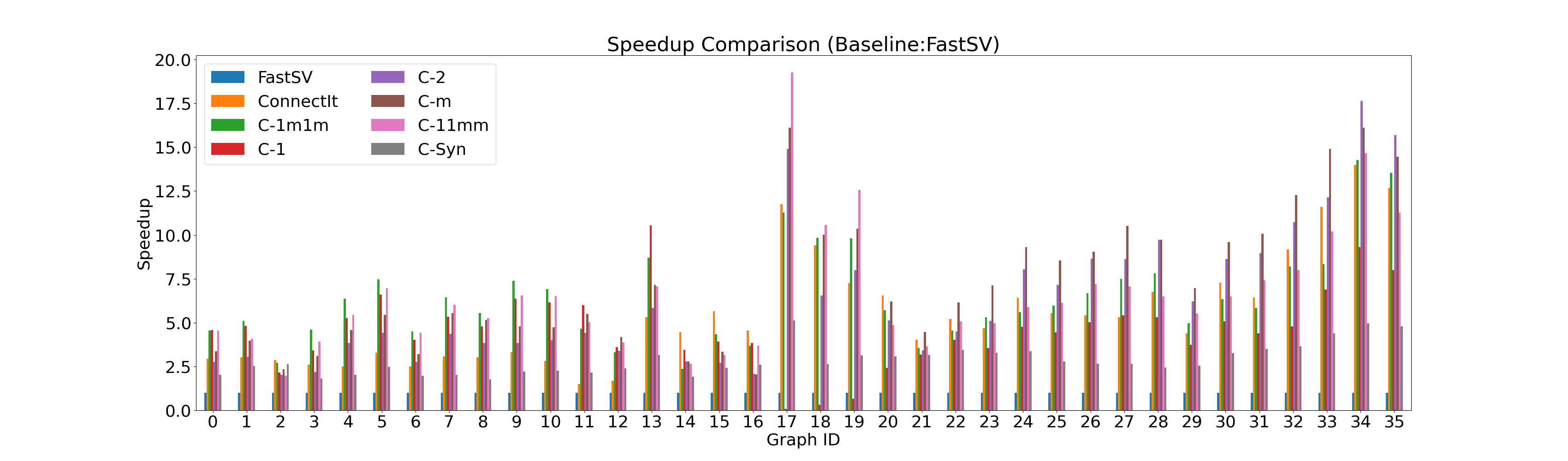}
    \caption{Speedups of \emph{ConnectIt} and Different \emph{Contour} Variants compared with \emph{FastSV}.}
    \label{fig:1LocaleSpeedup1}
    \vspace{-10.pt}
\end{figure*}

In Fig. \ref{fig:1LocaleSpeedup1}, we observe the speedups of all the methods compared to \emph{FastSV}. The average speedups, from highest to lowest, are as follows: \emph{C-m} with a speedup of 7.3, \emph{C-11mm} with 6.6, \emph{ConnectIt} with 6.49, \emph{C-1m1m} with 6.33, \emph{C-2} with 6.33, \emph{C-1} with 4.62, and \emph{C-Syn} with 2.87. This indicates that high-order minimum mapping operators often perform significantly better than \emph{FastSV}.

However, \emph{C-1} shows particularly good speedup when the size and diameter of the graphs are small. The reason behind this behavior is that when the diameter of a graph is small, \emph{C-1} can converge quickly within a few iterations. Additionally, for each iteration, the total workload for each core is very low because of the small graph size and \emph{C-1}'s focus on only checking one-path neighbors. This operation exhibits excellent locality, which can be explained clearly by the work-depth model \cite{blelloch2010parallel} well. Consequently, under these conditions, \emph{C-1} achieves better performance. However, for larger graphs or graphs with higher diameters, \emph{C-1} cannot maintain this better speedup compared to other variants due to a higher overall workload or larger number of iterations.

\emph{C-m} achieves the best average speedup, but it may not be suitable for all cases because each iteration will have a higher cost. As mentioned above, \emph{C-1} excels in scenarios with small diameters and sizes, while \emph{C-m} is most effective for large-diameter or large-size graphs. It reduces the total number of iterations to minimize the overall cost.

\emph{C-2}, on the other hand, exhibits a relatively small cost in each iteration as it only checks reachable vertices within two steps. Simultaneously, it can significantly reduce the total number of iterations for graphs with large diameters. Thus, \emph{C-2} stands as a stable and simple operator that fits well in most cases.

\emph{C-1m1m} is also a stable operator, but its policy differs from \emph{C-2}. It alternates between two extreme operators, \emph{C-1} and \emph{C-m}. \emph{C-1} reduces the cost of each iteration, while \emph{C-m} focuses on minimizing the total number of iterations. Combining these two operators optimizes the overall performance.

The strategy behind \emph{C-11mm} is different. It attempts to handle graphs with the smallest cost first. If, after several iterations, the graph does not converge, \emph{C-11mm} employs the \emph{C-m} operator to reduce the total number of iterations rapidly. When a graph contains both very small and very large diameter components, \emph{C-11mm} quickly converges the small diameter components with minimal cost before efficiently handling the large diameter components using the \emph{C-m} operator.

In summary, the speedup of the algorithms compared to \emph{FastSV} exhibits variations based on the size and characteristics of the graphs. Different operators are more suitable for different scenarios, depending on graph size, and diameter. The overall performance of our \emph{Contour} algorithm outperforms \emph{FastSV} in many cases, particularly when utilizing high-order minimum mapping operators, validating the effectiveness of our approach.

\subsection{Speedup compared with ConnectIt} \label{subsec:speedup2}
\begin{figure*}
    \centering
   \hspace*{-3cm}  
   \includegraphics[width=1.29\linewidth]{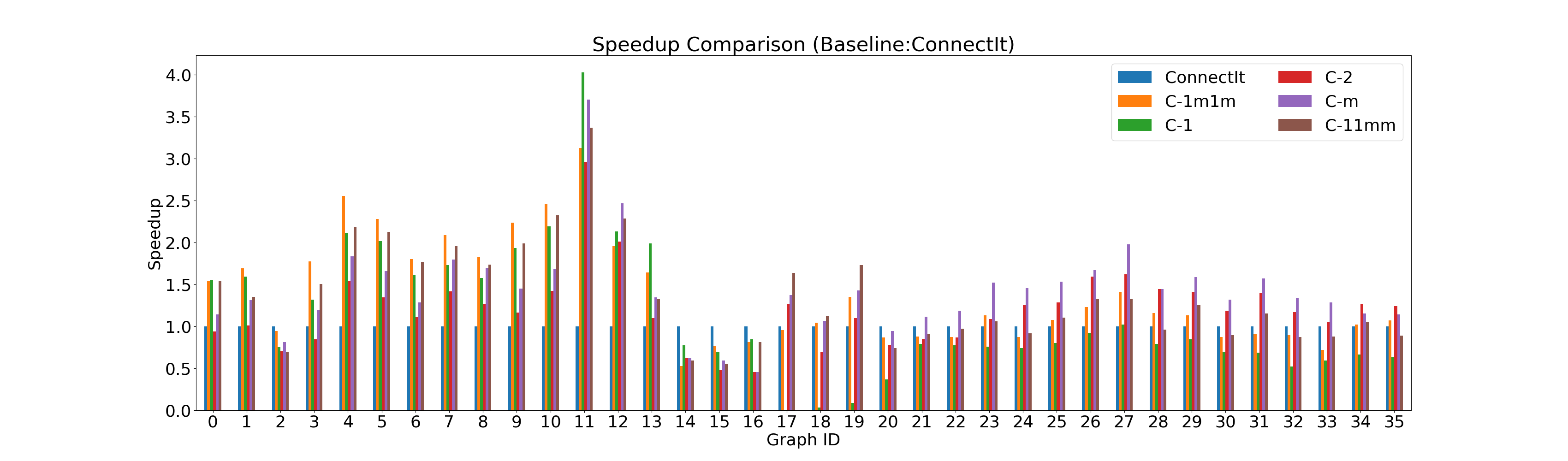}
    \caption{Speedups of Different \emph{Contour} Variants compared with \emph{ConnectIt}.}
    \vspace{-10.pt}
    \label{fig:1LocaleSpeedup2}
\end{figure*}
In Fig. \ref{fig:1LocaleSpeedup2}, we examine the speedups of our \emph{Contour} algorithm compared to another state-of-the-art algorithm, \emph{ConnectIt}. We will expose another perspective that can significantly affect the performance of different algorithms. 

Across the 36 graphs, \emph{C-m} outperforms \emph{ConnectIt} on 31 graphs, with an average speedup of 1.41. Similarly, \emph{C-2} achieves better performance on 26 graphs, with an average speedup of 1.2. Both \emph{C-1m1m} and \emph{C-11mm} outperform \emph{ConnectIt} on 23 graphs, with average speedups of 1.37 and 1.35, respectively. All of these \emph{Contour} variants achieve better performance on more than half of the graphs. \emph{C-1} shows better performance on 14 graphs, with an average speedup of 1.11. \emph{C-Syn}, on the other hand, only outperforms \emph{ConnectIt} on 2 graphs, with an average speedup of 0.62.

The experimental results provide valuable insights into when to use \emph{Contour} algorithms and when to use \emph{ConnectIt} to achieve better performance. In general, when we have a sufficient number of parallel cores to significantly reduce the cost of one iteration, employing our \emph{Contour} algorithm will lead to better performance. The \emph{Contour} algorithm's efficiency lies in its ability to reduce the total number of iterations and workload per iteration, resulting in overall speedup. However, if the graph size is very large, and the number of parallel cores is relatively small, each core will have to handle a considerable number of edges in each iteration, limiting the parallel effect. This is very similar to sequential instead of parallel computing. In such scenarios, the performance improvement is driven by high efficiency instead of high scalability because the system cannot provide sufficient parallel resources, where \emph{ConnectIt} excels with almost linear time complexity, approaching optimality. Thus, \emph{ConnectIt} can achieve better performance when the workload per core is significantly high or when the system lacks parallel resources. 

The work-depth model can clarify these results. When the work per iteration is high and parallel resources are limited, \emph{ConnectIt} stands as an ideal choice since it requires only one iteration. Conversely, when parallel resources can significantly reduce the work per iteration, \emph{Contour} algorithms achieve better overall performance with their ability to tolerate more iterations.

In conclusion, the choice between \emph{Contour} algorithms and \emph{ConnectIt} depends on the available parallel resources or the size of different graphs. Our \emph{Contour} algorithm demonstrates superior performance when enough parallel resources are available, but \emph{ConnectIt} remains a suitable choice for scenarios with high workloads and limited parallel resources.

\subsection{Distributed Memory Results}

The previous sections' results were based on shared memory parallel execution. However, when we consider distributed memory parallel executions involving multiple computing nodes, the absolute execution times become much longer. In practical scenarios, using multiple distributed memory computing nodes to solve a problem with a much longer time is not reasonable if it can be handled by a single shared memory parallel node with much less time. Therefore, we just give a brief summary instead of the detailed experimental results as follows.

When comparing with \emph{FastSV}, our \emph{Contour} algorithm demonstrates significantly better speedup than that in the shared memory parallel node setting. Among all the variants of our \emph{Contour} algorithm, \emph{C-1} achieves much better speedup when the total number of iterations is relatively low. The reason for this lies in \emph{C-1}'s ability to achieve high locality and reduce additional communication. Communication becomes a major performance bottleneck in distributed system scenarios, overshadowing computation.

Taking advantage of high-level parallel language Chapel, the shared memory \emph{ConnectIt} algorithm can be run on distributed systems. Similarly, \emph{ConnectIt} exhibits better relative performance compared to \emph{Contour} when dealing with large-sized graphs. Due to relatively less communication overhead, \emph{ConnectIt} even achieves good performance for middle-sized graphs. For small and low-diameter graphs, \emph{C-1}, \emph{C-11mm}, and \emph{C-1m1m} are more efficient and offer better performance.


\section{Related Work}
For connected component problems, graph traversal methods\cite{dhulipala2021theoretically,shun2013ligra,shun2014simple,slota2014bfs} 
have a major problem where they cannot achieve high performance when graph diameters are large or a graph has many small components. Label propagation methods \cite{esfahani2021thrifty,stergiou2018shortcutting,raghavan2007near,shun2013ligra,slota2014bfs} cannot converge fast when the graph diameter is large.

The Shiloach-Vishkin (SV) algorithm \cite{SHILOACH198257} is the pioneering tree-based hooking-compressing method to reduce the total number of iterations efficiently. There are different kinds of improvements to the SV algorithm. Awerbuch and Shiloach (AS) \cite{awerbuch1987new} use a very efficient parallelization using proper computational primitives and sparse data structures. The AS algorithm only keeps the information of the current forest and the convergence criterion for AS is to check whether each tree is a star. Afforest \cite{sutton2018optimizing} is an extension of the SV algorithm that approaches optimal work efficiency by processing subgraphs in each iteration. The LACC \cite{azad2019lacc} algorithm uses linear algebraic primitives to implement connected components and is based on the PRAM AS algorithm. FastSV \cite{zhang2020fastsv} further simplifies and optimizes LACC's tree hooking and compressing method to improve the performance. Iteration-based tree hooking-compressing methods exploit large-scale parallel resources to reduce the cost of each iteration and the total number of iterations.

Union-find-based algorithms \cite{dhulipala2020connectit,galil1991data,patwary2010experiments} take advantage of the disjoint set data structure to reduce the total operations in one iteration. Tree-based methods try to reduce the number of iterations, but disjoint set-based methods focus on reducing the total number of operations. So, tree-based methods are suitable for large-scale parallel execution but disjoint set-based methods are good for parallel resources limited scenarios.

There are some works combining different methods together to optimize the performance further. Slota \emph{et al.} \cite{slota2016case} developed a distributed memory multi-step method that combines parallel \emph{BFS} and label propagation technique. The ParConnect algorithm \cite{jain2017adaptive} is based on both the SV algorithm and parallel breadth-first search (\emph{BFS}). ConnectIt \cite{dhulipala2020connectit} provides a  framework to provide different sampling strategies and tree hooking and compression schemes. 

Recently, different optimization methods for connected component problems have been proposed. Thrifty Label Propagation (TLP) algorithm \cite{esfahani2021thrifty} uses the skewed degree distribution of real-world graphs to develop their optimized label propagation algorithm.  Sutton \emph{et al.} \cite{sutton2018optimizing} uses sampling to find the connected components on a subset of the edges, which can be used to reduce the number of edge inspections when running connectivity on the remaining edges. 
 
We formulate the connected components as a contour line discover problem and develop different minimum mapping operators for different scenarios. Our method is flexible and simple. It can achieve high performance in different scenarios. 

\section{Conclusion}
In this study, we addressed the fundamental graph problem of finding connected components using a novel method called ``minimum mapping.'' Our approach is characterized by its simplicity, flexibility, and efficient implementation, setting it apart from existing state-of-the-art methods.

We proved that our method achieves convergence in $\mathcal{O}(\log_2(d_{max}))$ time, where $d_{max}$ represents the largest diameter among all components in a graph. Experimental results also show that our algorithm can converge in a small number of iterations for different graphs.

Experimental results showed that our \emph{Contour} method significantly outperforms the state-of-the-art large-scale parallel \emph{FastSV} method. Additionally, our method complements the state-of-the-art shared memory parallel \emph{ConnectIt} method. Notably, we have successfully integrated our method and the state-of-the-art methods into an open-source graph package, Arachne. Arachne extends an open-source framework for Python users, enabling efficient large-scale graph analytics on supercomputers. This integration empowers high-level Python users to conduct large graph analytics efficiently, regardless of their familiarity with supercomputing and large data processing intricacies.

\section*{Acknowledgment}
We appreciate the help from the Arkouda and the Chapel community when we integrated the algorithms into Arkouda. This research was funded in part by NSF grant number CCF-2109988.

\section{References}
\bibliographystyle{plain}
\bibliography{Arkouda-chapel,suffixarray,graph,conn_comps,unionfind}

\if 0
\newsavebox{\boxDataSet}
\begin{lrbox}{\boxDataSet}
\centering
\begin{tabular}{c|c|c|r|r|r|}
\cline{2-6}
                                                            & \textbf{Graph Name}   & Graph ID& \multicolumn{1}{l|}{\textbf{Number Edges}} & \multicolumn{1}{l|}{\textbf{Number Vertices}} & \multicolumn{1}{l|}{\textbf{Diameter}} \\ \hline
\multicolumn{1}{|l|}
{\multirow{12}{*}{\rotatebox[origin=c]{90}{Real-World Graph}}} & ca-GrQc &1              & 14484                                      & 5242                                          &                                   \\ \cline{2-6} 
\multicolumn{1}{|l|}{}                                      & ca-HepTh   &2           & 25973                                      & 9877                                          &                                   \\ \cline{2-6} 
\multicolumn{1}{|l|}{}                                      & as-caida20071105 &3     & 53381                                      & 26475                                         &                                   \\ \cline{2-6} 
\multicolumn{1}{|l|}{}                                      & facebook\_combined &4   & 88234                                      & 4039                                          &                                   \\ \cline{2-6} 
\multicolumn{1}{|l|}{}                                      & ca-CondMat&5            & 93439                                      & 23133                                         &                                   \\ \cline{2-6} 
\multicolumn{1}{|l|}{}                                      & ca-HepPh &6             & 118489                                     & 12008                                         &                                  \\ \cline{2-6} 
\multicolumn{1}{|l|}{}                                      & email-Enron &7           & 183831                                     & 36692                                         &                                   \\ \cline{2-6} 
\multicolumn{1}{|l|}{}                                      & ca-AstroPh &8           & 198050                                     & 18772                                         &                                   \\ \cline{2-6} 
\multicolumn{1}{|l|}{}                                      & loc-brightkite\_edges&9 & 214078                                     & 58228                                         &                                   \\ \cline{2-6} 
\multicolumn{1}{|l|}{}                                      & soc-Epinions1 &10        & 405740                                     & 75879                                         &                                   \\ \cline{2-6}
\multicolumn{1}{|l|}{}                                      & amazon0601  &11          & 2443408                                    & 403394                                        &                                   \\ \cline{2-6} 

\multicolumn{1}{|l|}{}                                      & com-Youtube &12          & 2987624                                    & 1134890                                       &                                   \\ \hline
\multicolumn{1}{|l|}{\multirow{9}{*}{\rotatebox[origin=c]{90}{Synthetic Graph}}}  
& delaunay\_n20  &13       & 3145686                                    & 1048576                                       &                                    \\ \cline{2-6} 
\multicolumn{1}{|l|}{}                                      & delaunay\_n21  &14       & 6291408                                    & 2097152                                       &                                    \\ \cline{2-6} 
\multicolumn{1}{|l|}{}                                      & delaunay\_n22   &15      & 12582869                                   & 4194304                                       &                                    \\ \cline{2-6} 
\multicolumn{1}{|l|}{}                                      & delaunay\_n23 &16        & 25165784                                   & 8388608                                       &                                    \\ \cline{2-6} 
\multicolumn{1}{|l|}{}                                      & rgg\_n \_2\_21  &17      & 14487995                                   & 2097152                                       &                                   \\ \cline{2-6} 
\multicolumn{1}{|l|}{}                                      & rgg\_n \_2\_22   &18     & 30359198                                   & 4194304                                       &                                   \\ \cline{2-6} 
\multicolumn{1}{|l|}{}                                      & rgg\_n \_2\_23 &19       & 63501393                                   & 8388608                                       &                                   \\ \cline{2-6} 
\multicolumn{1}{|l|}{}                                      & rgg\_n \_2\_24  &20      & 132557200                                  & 16777216                                      &                                   \\ \hline
\end{tabular}
\end{lrbox}
\begin{table}[thbp]
\centering
\caption{Dataset Descriptions}
\scalebox{0.65}{\usebox{\boxDataSet}}
\label{tab:dataset}
\end{table}
\fi

\if 0
\newsavebox{\Graphs}
\begin{lrbox}{\Graphs}
\begin{tabular}{|l|r|r|r|r|r|r|r|r|r|}
\hline
\textbf{Graphs}      & \textbf{m} & \textbf{n} & \textbf{components} & \textbf{fast-sv} & \textbf{fast-s (1)} & \textbf{fast-s atomic (2)} & \textbf{speedup (1)} & \textbf{speedup (2)} & \textbf{source}                                 \\ \hline
\textbf{amazon0601}  & 2443408    & 403394     & 7                   & 0.09             & 0.05                & \textbf{0.04}              & 1.78                 & \textbf{2.46}        & \cite{snapnets}                \\ \hline
\textbf{com-Youtube} & 2987624    & 1134890    & 1                   & 0.13             & 0.07                & \textbf{0.05}              & 2.04                 & \textbf{2.59}        & \cite{snapnets}                \\ \hline
\textbf{Queen\_4147} & 166823197  & 4147110    & 1                   & 6.43             & \textbf{4.01}       & 1.26                       & 1.60                 & \textbf{5.10}        & \cite{10.1145/2049662.2049663} \\ \hline
\textbf{kmer-A2a}    & 180292586  & 170728175  & 5353                & 43.77            & \textbf{16.06}      & 132.04                     & \textbf{2.73}        & 0.33                 & \cite{10.1145/2049662.2049663} \\ \hline
\textbf{kmer-V1r}    & 232705452  & 214005017  & 9                   & 44.88            & \textbf{16.79}      & 73.19                      & \textbf{2.67}        & 0.61                 & \cite{10.1145/2049662.2049663} \\ \hline
\textbf{uk-2002}     & 298113762  & 18520486   & 1990                & 10.07            & \textbf{6.22}       & 10.90                      & \textbf{1.62}        & 0.92                 & \cite{10.1145/2049662.2049663} \\ \hline
\textbf{uk-2005}     & 936364282  & 39459925   & 7727                & 30.01            & \textbf{17.01}      & 22.13                      & \textbf{1.76}        & 1.36                 & \cite{10.1145/2049662.2049663} \\ \hline
\textbf{friendster}  & 1806067135 & 65608366   & 1                   & 166.71           & 109.97              & \textbf{106.64}            & 1.52                 & \textbf{1.56}        & \cite{snapnets}                \\ \hline
\textbf{delaunayn20} & 3145686    & 1048576    & 1                   & 0.22             & 0.11                & \textbf{0.08}              & 1.97                 & \textbf{2.87}        & \cite{10.1145/2049662.2049663} \\ \hline
\textbf{delaunayn21} & 6291408    & 2097152    & 1                   & 0.50             & 0.23                & \textbf{0.13}              & 2.19                 & \textbf{3.68}        & \cite{10.1145/2049662.2049663} \\ \hline
\textbf{delaunayn22} & 12582869   & 4194304    & 1                   & 1.14             & 0.47                & \textbf{0.26}              & 2.44                 & \textbf{4.32}        & \cite{10.1145/2049662.2049663} \\ \hline
\textbf{delaunayn23} & 25165784   & 8388608    & 1                   & 2.78             & 1.00                & \textbf{0.54}              & 2.77                 & \textbf{5.13}        & \cite{10.1145/2049662.2049663} \\ \hline
\textbf{delaunayn24} & 50331601   & 16777216   & 1                   & 5.46             & 2.30                & \textbf{1.33}              & 2.38                 & \textbf{4.10}        & \cite{10.1145/2049662.2049663} \\ \hline
\end{tabular}
\end{lrbox}

\begin{table*}[thbp]
\caption{Information of the graphs used for experimentation. A combination of real-world social, web crawl, polymer networks, and synthetic delaunay triangulation graphs were utilized from two major network repositories \cite{snapnets,10.1145/2049662.2049663}.}
\centering
\scalebox{0.7}{\usebox{\Graphs}}
\label{tab:graphs}
\end{table*}

\begin{figure}
    \centering
    \includegraphics[width=.99\linewidth]{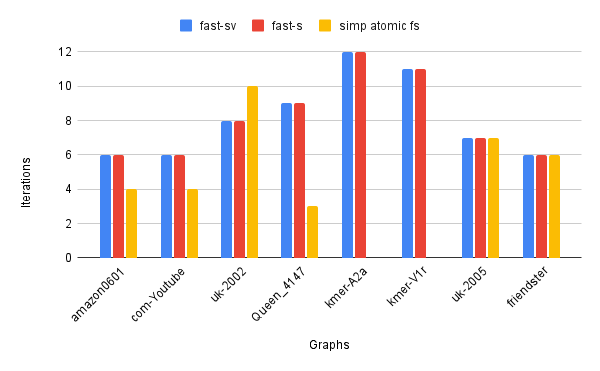}
    \caption{Number of iterations until convergence for each algorithm and real-world dataset. We note no results are shown for the kmer graphs for fast-s atomic due to their number of iterations being exorbitantly high, at 80 and 54 for kmer-A2a and kmer-V1r respectively.}
    \label{fig:iters}
\end{figure}

\fi

\if 0
\newsavebox{\boxDataSet}
\begin{lrbox}{\boxDataSet}
\centering
\begin{tabular}{|c|c|c|c|c|}
\hline
Graph Type                            & Graph Name            & Graph ID & Number of   Vertices & Number of Edges \\ \hline
\multirow{17}{*}{{\rotatebox[origin=c]{90}{Real-World Graph}}} & ca-GrQc               & 1        &                      &                 \\ \cline{2-5} 
                                      & ca-HepTh              & 2        &                      &                 \\ \cline{2-5} 
                                      & as-caida20071105      & 3        &                      &                 \\ \cline{2-5} 
                                      & facebook\_combined    & 4        &                      &                 \\ \cline{2-5} 
                                      & ca-CondMat            & 5        &                      &                 \\ \cline{2-5} 
                                      & ca-HepPh              & 6        &                      &                 \\ \cline{2-5} 
                                      & ca-AstroPh            & 7        &                      &                 \\ \cline{2-5} 
                                      & email-Enron     & 8        &                      &                 \\ \cline{2-5} 
                                      & loc-brightkite\_edges & 9        &                      &                 \\ \cline{2-5} 
                                      & soc-Epinions1         & 10       &                      &                 \\ \cline{2-5} 
                                      & amazon0601            & 11       &                      &                 \\ \cline{2-5} 
                                      & com-youtube.ungraph   & 12       &                      &                 \\ \cline{2-5} 
                                      & soc-LiveJournal1  & 13       &                      &                 \\ \cline{2-5} 
                                      & kmer\_V1r             & 14       &                      &                 \\ \cline{2-5} 
                                      & kmer\_A2a             & 15       &                      &                 \\ \cline{2-5} 
                                      & uk-2002               & 16       &                      &                 \\ \cline{2-5} 
                                      & uk-2005               & 17       &                      &                 \\ \hline
\multirow{15}{*}{{\rotatebox[origin=c]{90}{Synthetic Graphs}}}  & delaunay\_n10         & 18       &                      &                 \\ \cline{2-5} 
                                      & delaunay\_n11         & 19       &                      &                 \\ \cline{2-5} 
                                      & delaunay\_n12         & 20       &                      &                 \\ \cline{2-5} 
                                      & delaunay\_n13         & 21       &                      &                 \\ \cline{2-5} 
                                      & delaunay\_n14         & 22       &                      &                 \\ \cline{2-5} 
                                      & delaunay\_n15         & 23       &                      &                 \\ \cline{2-5} 
                                      & delaunay\_n16         & 24       &                      &                 \\ \cline{2-5} 
                                      & delaunay\_n17         & 25       &                      &                 \\ \cline{2-5} 
                                      & delaunay\_n18         & 26       &                      &                 \\ \cline{2-5} 
                                      & delaunay\_n19         & 27       &                      &                 \\ \cline{2-5} 
                                      & delaunay\_n20         & 28       &                      &                 \\ \cline{2-5} 
                                      & delaunay\_n21         & 29       &                      &                 \\ \cline{2-5} 
                                      & delaunay\_n22         & 30       &                      &                 \\ \cline{2-5} 
                                      & delaunay\_n23         & 31       &                      &                 \\ \cline{2-5} 
                                      & delaunay\_n24         & 32       &                      &                 \\ \hline
\end{tabular}
\end{lrbox}

\begin{table}[thbp]
\centering
\caption{Dataset Descriptions}
\scalebox{0.65}{\usebox{\boxDataSet}}
\label{tab:dataset}
\end{table}

\begin{lemma}[Distance Constraint] \label{lemma:distance}
Let $P = \edge{s_0, \ldots, s_{n-1}}, n\ge 2$, be a path with $s_0$ as the smallest vertex, and consider running Alg. \ref{alg:connectedcomponents} on $P$. After the $k^{th}$ iteration, let the local minimum set $R(k)=\{v|v \in MMS(k)^1 \wedge L_k[v]=v\}$, we have $2^{k-1}\sum_{v \in R}\lfloor \frac{dist_k(v)+1}{2} \rfloor \le \lfloor \frac{n+1}{2}\rfloor$, where $dist_k(v)$ is the longest distance from other vertices to $v$ in the pointer graph after the $k^{th}$ iteration.
\end{lemma}

\begin{proof}
We use induction on $k$ to prove the lemma.

For the base case of $k=1$, if all vertices are in increasing order along the path $P$, then $MMS(1)^1=s_0$ and $dist_1(s_0)=n-1$, $2^{k-1}\sum_{v \in R}\lfloor \frac{dist_k(v)+1}{2} \rfloor = \lfloor \frac{n}{2} \rfloor \le \lfloor \frac{n+1}{2} \rfloor$. So, the inequality holds.

Otherwise, if there are multiple local minimum vertices $v_{m_1},...,v_{m_j}$ that are $j$ independent roots in the pointer graph and we let $n_s$ be the number of vertices in root tree $v_{m_s}$, where $1\le s \le j$. So, we have $\sum_{s=1}^{s=j} n_s=n$ and $dist_1(v_{m_s}) \le n_s-1$. Therefore, $\sum_{s=1}^{s=j}\lfloor \frac{dist_1(v_{m_s})+1}{2}\rfloor =\sum_{s=1}^{s=j}\lfloor \frac{n_s}{2}\rfloor \le \lfloor \frac{n+1}{2}\rfloor$. So, the inequality holds for the base case.

Now we assume that when $t=k$, the inequality holds. We prove that when $k=t+1$, the inequality also holds.

If $R(t)={s_0}$ and $dist_t(s_0)>1$, then after the $(t+1)$ iteration, $dist_{t+1}(s_0)=\lfloor \frac{dist_t(s_0)+1}{2}\rfloor \le \frac{\lfloor \frac{n+1}{2}\rfloor }{2^{k-1}}$. So, $\lfloor \frac{dist_{t+1}(s_0)+1}{2} \rfloor \le \frac{dist_{t+1}(s_0)+1}{2}=  \frac{\lfloor \frac{dist_t(s_0)+1}{2}\rfloor+1}{2}  \le  \frac{\frac{\lfloor \frac{n+1}{2}\rfloor }{2^{k-1}} +1 }{2}$, so the inequality holds.

If $|R(t)|>1$, we discuss two different cases.

(1) $\forall v \in R(t)$, if $dist_t(v)>1$, then $\lfloor \frac{dist_{t+1}(v)+1}{2}\rfloor \le 2 \times \lfloor \frac{dist_t(v)+1}{2})\rfloor$.

\end{proof}

\begin{lemma}[Distance Constraint] \label{lemma:distance}
Let $P = \edge{s_0, \ldots, s_{n-1}}, n\ge 2$, be a path with $s_0$ as the smallest vertex, and consider running Alg. \ref{alg:connectedcomponents} on $P$. After the $k^{th}$ iteration, let the local minimum set $R(k)=\{v|v \in MMS(k)^1 \wedge L_k[v]=v\}$, we have $(2^{k-1}\sum_{v \in R}\lfloor \frac{dist_k(v)+1}{2} \rfloor \le \lfloor \frac{n+1}{2}\rfloor$, where $dist_k(v)$ is the longest distance from other vertices to $v$ in the pointer graph after the $k^{th}$ iteration. 
\end{lemma}
\begin{proof}
Let's do induction on $k$.

For $k=1$, if all vertices are in increasing order along the path $P$, then $MMS(1)^1=s_0$ and $dist_1(s_0)=n-1$, $2^{k-1}\sum_{v \in R}\lfloor \frac{dist_k(v)+1}{2} \rfloor = \lfloor \frac{n}{2} \rfloor \le  \lfloor \frac{n+1}{2} \rfloor$. So, the inequation holds. 

Otherwise, there will have multiple local minimum vertices $v_{m_1},...,v_{m_j}$ that are  $j$ independent roots in the pointer graph with $dist_1(v_{m_s})=1$ and we let $n_s$ be the number of vertices in root tree $v_{m_s}$, where $1\le s \le j$. So, we have $\sum_{s=1}^{s=j} n_s=n$ and $dist_1(v_{m_s})=n_s-1$. So, we have $\sum_{s=1}^{s=j}\lfloor \frac{dist_1(v_{m_s})+1}{2}\rfloor =\sum_{s=1}^{s=j}\lfloor \frac{n_s}{2}\rfloor \le \lfloor \frac{n+1}{2}\rfloor$. So, the inequation still holds.

Let $t\ge 1$, we assume that when $t=k$, the inequation holds. Now we prove when $k=t+1$, the inequation also holds. 

If $R(t)=\{s_0\}$ and $dist_t(s_0)>1$, then after the $(t+1)$ iteration, $dist_{t+1}(s_0)\le\lfloor \frac{dist_t(s_0)+1}{2}\rfloor$ and $\lfloor \frac{dist_{t+1}(s_0)+1}{2} \rfloor \le 2\times \lfloor \frac{dist_t(s_0)+1}{2}\rfloor$, so the inequation holds. 

If $|R(t)|>1$, we discuss two different cases.

(1)  $\forall v \in R(t)$, if $dist_t(v)>1$, then  $\lfloor \frac{dist_{t+1}(v)+1}{2}\rfloor \le 2 \times \lfloor \frac{dist_t(v)+1}{2})\rfloor$. So, the inequation holds.

(2) If $\exists v\in R(t) \wedge dist_t(v)=1$, we can find $v'\in R(t)  \wedge v'<v$ and $v'$ is the minimum vertex among the reachable vertices of $v$ after the $t^{th}$ iteration, then $v$ will disappear from $R(t+1)$ and merge into the root tree $v'$ after the $(t+1)^{th}$ iteration. So $\lfloor \frac{dist_{(t+1)}(v')+1}{2}\rfloor \le 2\times (\lfloor \frac{dist_t(v')+1}{2}\rfloor +\lfloor \frac{dist_t(v)+1)}{2}\rfloor)  $ no matter $dist_t(v')==1$ or $dist_t(v')>1$. Hence, considering all the cases, we also have the same conclusion.
\end{proof}

\begin{lemma}[Distance Constraint] \label{lemma:distance}
Let $P = \edge{s_0, \ldots, s_{n-1}}, n\ge 2$, be a path with $s_0$ as the smallest vertex, and consider running Alg. \ref{alg:connectedcomponents} on $P$. After the $k^{th}$ iteration, let the local minimum set $R(k)=\{v|v \in MMS(k)^1 \wedge L_k[v]=v\}$, we have $(\frac{3}{2})^{k-1}\sum_{v \in R}dist_k(v)\le n-1$, where $dist_k(v)$ is the longest distance from other vertices to $v$ in the pointer graph after the $k^{th}$ iteration. 
\end{lemma}
\begin{proof}
Let's do induction on $k$.

For $k=1$, if all vertices are in increasing order along the path $P$, then $MMS(1)^1=s_0$ and $dist^1(s_0)=n-1$, $\sum_{v\in R}dist_1(v) = n-1$. So, the inequation holds. 

Otherwise, if there are multiple local minimum vertices $v_{m_1},...,v_{m_j}$ that are $j$ independent roots in the pointer graph and we let $n_s$ be the number of vertices in root tree $v_{m_s}$, where $1\le s \le j$. So, we have $\sum_{s=1}^{s=j} n_s=n$ and $dist_1(v_{m_s}) \le n_s-1$. Therefore, $\sum_{s=1}^{s=j}dist_1(v_{m_s})\le n-j <n-1$. So, the inequality holds for the base case.

Let $t\ge 1$; we assume that when $t=k$, the inequation holds. Now we prove when $k=t+1$, the inequation also holds. If $R(t)=\{s_0\}$ and $dist_t(s_0)>1$, then after the $(t+1)$ iteration, $dist_{t+1}(s_0)\le(\frac{2}{3})dist_t(s_0)$, so the inequation holds. 

If $|R(t)|>1$, we discuss two different cases.

(1) If  $\forall v \in R(t)$, $dist_t(v)>1$, then  $dist_{t+1}(v)\le(\frac{2}{3})dist_t(v)$. 

(2) If $\exists v\in R(t) \wedge dist_t(v)=1 \wedge v\ne=s_0$, we can find $v'\in R(t)  \wedge v'<v$ and $v'$ is the minimum vertex among the reachable vertices of $v$ after the $t^{th}$ iteration, then $v$ will disappear from $R(t+1)$ and merge into the root tree $v'$ after the $(t+1)^{th}$ iteration. So $dist_{(t+1)}(v') \le (\frac{2}{3})(dist_t(v)+dist_t(v'))$ no matter $dist_t(v')==1$ or $dist_t(v')>1$. If $v=s_0 \wedge dist_t(v)=1$, we can find $v'\in R(t) $  and $v'$ is any reachable vertices of $v$ after the $t^{th}$ iteration, Hence, considering all the cases, we also have the same conclusion.
\end{proof}

\begin{lemma}[Height Constraint] \label{lemma:height}
Let $P = \edge{s_0, \ldots, s_{n-1}}, n\ge 2$, be a path with $s_0$ as the smallest vertex, and consider running Alg. \ref{alg:connectedcomponents} on $P$. After the $k^{th}$ iteration, let the root tree set $R(k)=\{v|v \in MMS(k)^1 \wedge L_k[v]=v\}$ and $v_h(k)=\argmax_{v \in R(k)}H(v)$, where $H(v)$ is the height of root tree $v$. We have $(\frac{3}{2})^{k-1}H(v_h(k))\le n-1$. 
\end{lemma}
\begin{proof}
Let's do induction on $k$.

For $k=1$, if all vertices are in increasing order along the path $P$, then $v_h(1)=s_0$ and $H(s_0)=n-1$. So, the inequation holds. 

Otherwise, if there are multiple root trees $v_{m_1},...,v_{m_j}$ in the pointer graph and $v_h(1)$ is the highest root tree, then we have $H(v_h(1))<n-1$ because it cannot include all the vertices. So, the inequality holds for the base case.

Let $t\ge 1$; we assume that when $t=k$, the inequation holds. Now we prove when $k=t+1$, the inequation also holds. If $v_h(k)=v_h(k+1)$, and  $H(v_h(k))>1$, then after the $(t+1)$ iteration, $H(v_h(k+1))\le(\frac{2}{3})H(v_h(k))$, so the inequation holds. 

 If $v_h(k) \ne v_h(k+1)$, and  $H(v_h(k+1))>1$, this means that $v_h(k+1)$ has merged other vertices into its tree. The mapping operator will make the height of the merged tree less than $\frac{2}{3}$ height of the highest original trees. So, we still have $H(v_h(k+1))\le(\frac{2}{3})H(v_h(k))$. Therefore, the inequation holds. 
Considering all the cases, we also have the conclusion.
\end{proof}

to merge or compress the Another aspect is optimizing the implementation of the conditional vector assignment in Eq. \ref{eq:cas}. The implementation method in Eq. \ref{eq:cas} can ensure the smallest label will be assigned to the vertices. However, putting an atomic operation in a loop is very low performance. So, we design a new compare-and-assignment atomic operation $CAA(old,new)$. It means that if $old>new$, then $old=new$. However, all of these will be implemented in an atomic operation. Suppose the low-level system can support such a compare-and-assignment atomic operation with high performance. In that case, we can use such a new atomic operation to implement our \emph{Contour} algorithm.
We currently replace the conditional assignment in Eq. \ref{eq:cas} with Eq. \ref{eq:if} to partially simulate its effect. 
\begin{equation} \label{eq:if}
\begin{aligned}
&oldx_i=atomic\_read(x_i)\\
&CAS(x_i,oldx_i,z)&\\
\end{aligned}
\end{equation}
Although Eq. \ref{eq:if} cannot guarantee that all vertices will be updated by the available smallest label in each iteration, it can simulate the scenario when the smallest label is assigned to the current vertex. Even more, we use an aggressive method to replace the atomic operation with the non-atomic operation. The problem with the non-atomic operation is that some vertices will not be updated by the available smallest label in each iteration. This will not affect the results of the final components, but it may cause additional assignments or iterations that will potentially affect the total performance.


Alg. \ref{alg:fastsv} gives the framework of the \emph{FastSV} algorithm. It uses a $f$ array to express the father vector of all the vertices and $f_{next}$ to keep the updated value of $f$. Lines from 1 to 4 initialize the arrays. Lines from 6 to 26 will identify all the components. To build the tree, it employs two tree hooking strategies, stochastic and aggressive hooking. Stochastic hooking updates the grandfather's hooking position to another connected vertex's grandfather (lines from 7 to 11), and aggressive hooking updates the father's hooking position to another connected vertex's grandfather (lines from 12 to 16). The short-cutting operation will update the current vertex's father as its grandfather (lines 17 to 21). The algorithm will converge when the grandparent vertices have no change (lines 23 to 25).

Taking advantage of the concept of minimum-mapping to significantly reduce the operations is the major contribution of our algorithm. Compared with our \emph{Contour} algorithm, it will use three different updating methods to keep the semantics of a tree merge. However, we only use one vertex updating method based on the minimum-mapping operator. Our method should be about three times faster than the \emph{FastSV} method if we just compare the major components and operations of the two algorithms.
\RestyleAlgo{ruled}
\begin{algorithm}[tpbh]
\small
\DontPrintSemicolon
\LinesNumbered
\caption{FastSV Algorithm}
\label{alg:fastsv}
\tcc{$G=\edge{E,V}$ is the input graph with edge set $E$ and vertex set $V$. $m=|E|$ is the total number of edges and $n=|V|$ is the total number of vertices.}
\ForAll {i in 0..n-1} {
    $f[i]=i$\;
    $f_{next}[i]=i$\;
}
var $converge=False$\;
\While {(! converge)} {
    \ForAll  {($e=\edge{w,v} \in E$)} {
        \tcc{Stochastic hooking}
        \If {($f[f[v]] < f_{next}[f[w]]$)} {
            $f_{next}[f[w]] = f[f[v]]$  \;
        }
    }
    \ForAll  {($e=\edge{w,v} \in E$)} {
        \tcc{Aggreesive hooking}
        \If {($f[f[v]] < f_{next}[w]$)} {
            $f_{next}[w] = f[f[v]]$ \;
          }    
    }
    \ForAll  {($v \in V$)} {
        \tcc{Short cutting}
        \If {($f[f[v]] < f_{next}[v]$)} {
            $f_{next}[v] = f[f[v]]$ \;
          }    
    }
    $f=f_{next}$\;
    \If {(grandparents no change)}{
        $converge=True$\;
    }
}
\Return $f$ \;
\end{algorithm}

\begin{figure}
    \centering
    \includegraphics[width=.89\linewidth]{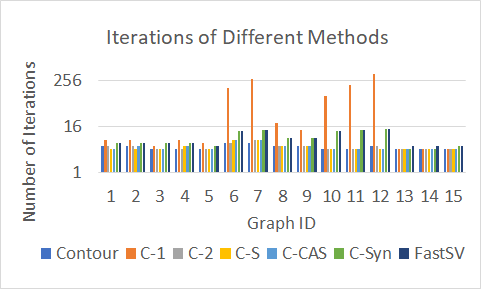}
    \caption{Number of iterations of our different variants of \emph{Contour} algorithm and \emph{FastSV} algorithm for both real-world and synthetic graphs (one locale).}
    \label{fig:iteRS1}
\end{figure}

\begin{figure}
    \centering
    \includegraphics[width=.89\linewidth]{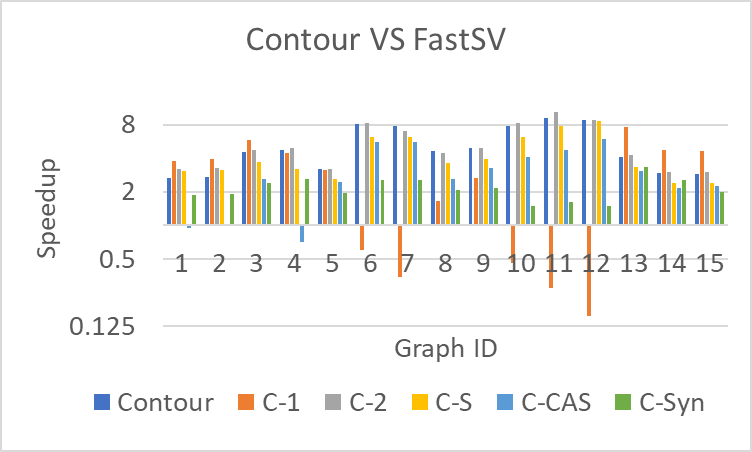}
    \caption{The performance of our \emph{Contour} algorithm and its variants compared with the \emph{FastSV} algorithm for both real-world and synthetic graphs (one locale).}
    \label{fig:speedupRS1}
\end{figure}

\begin{figure}
    \centering
    \includegraphics[width=.89\linewidth]{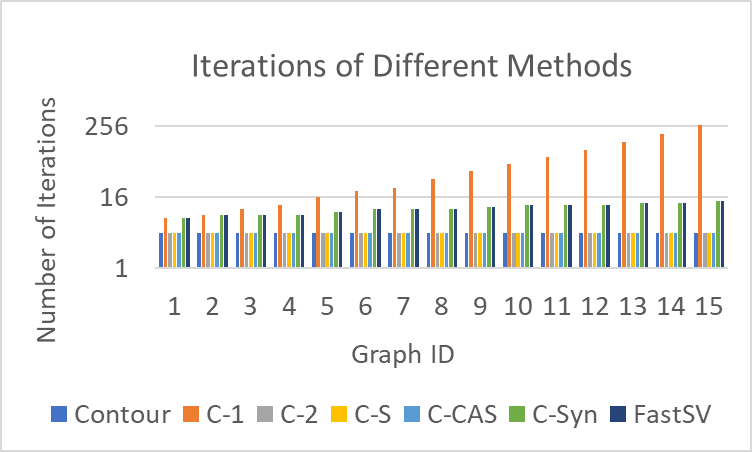}
    \caption{Number of iterations of our different variants of \emph{Contour} algorithm and \emph{FastSV} algorithm for similar topology graphs (one locale).}
    \label{fig:iteS1}
\end{figure}

\begin{figure}
    \centering
    \includegraphics[width=.89\linewidth]{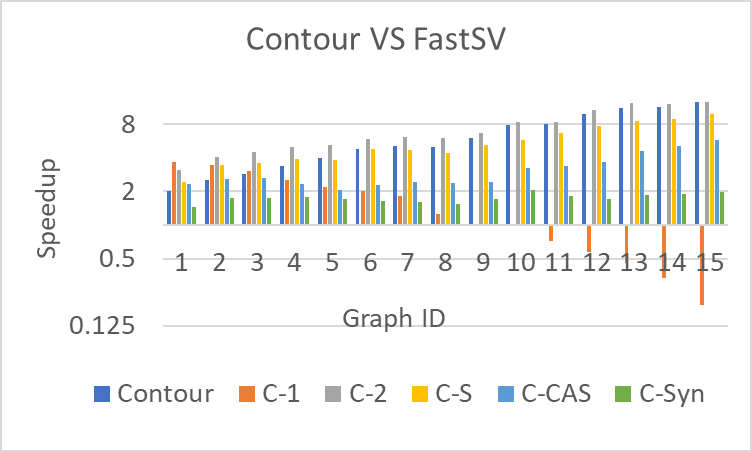}
    \caption{The performance of our \emph{Contour} algorithm and its variants compared with the \emph{FastSV} algorithm for similar topology graphs (one locale).}
    \label{fig:speedupS1}
\end{figure}

\begin{figure}
    \centering
    \includegraphics[width=.89\linewidth]{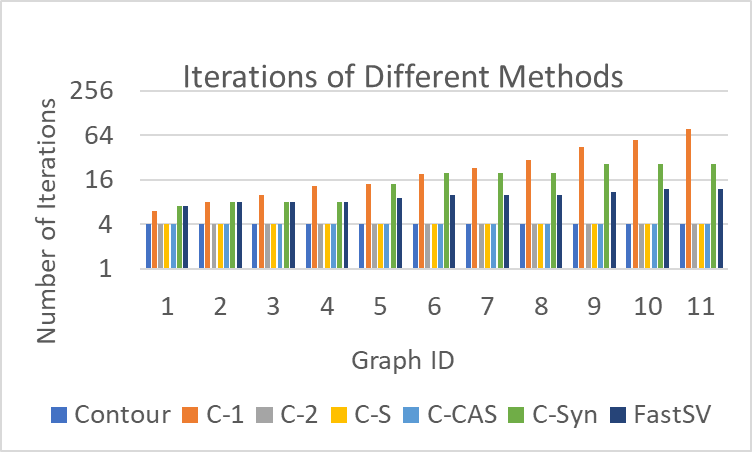}
    \caption{Number of iterations of our different variants of \emph{Contour} algorithm and \emph{FastSV} algorithm for graphs with similar topology (4 locales).}
    \label{fig:iterS4}
\end{figure}

\begin{figure}
    \centering
    \includegraphics[width=.89\linewidth]{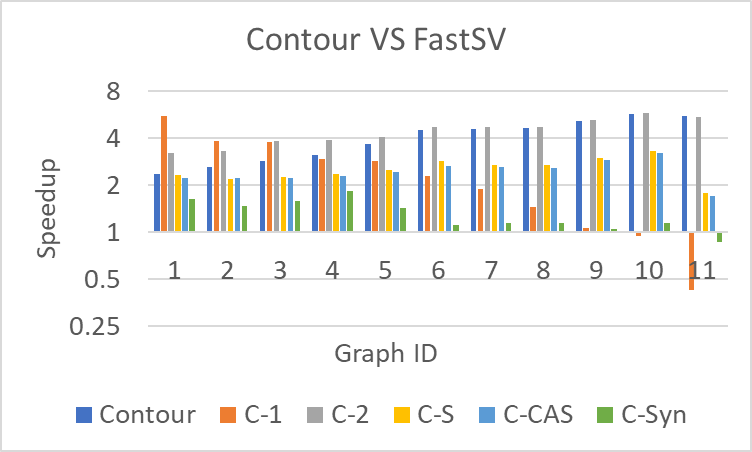}
    \caption{The performance of our \emph{Contour} algorithm and its variants compared with the \emph{FastSV} algorithm for graphs with similar topology (4 locales).}
    \label{fig:speedupS4}
\end{figure}

 \subsubsection{Shared Memory Results} We will compare the results from two aspects: the total number of iterations and the speedup of our method compared with the \emph{FastSV} method.
 
\emph{Comparison of Number of Iterations:}
First, we analyze how different methods can affect the number of iterations. Fig. \ref{fig:iteRS1} shows the total number of iterations for different graphs with our optimized \emph{Contour} algorithm, its different variants, and the \emph{FastSV} method.
We can see our \emph{Contour} method has the smallest number of iterations for almost all the different graphs. The number of iterations of \emph{C-1} is very high for high-diameter graphs. This is reasonable because the number of iterations of our one-order minimum-mapping operator will be linear with the diameter of a graph. The \emph{C-S} and \emph{C-CAS} variants are very close to each other and also have a very small number of iterations.  For some real-world graphs, the numbers of iterations of \emph{C-S} and \emph{C-CAS} are slightly higher than the \emph{Contour} method. \emph{C-Syn} and \emph{FastSV} have almost the same number of iterations for most graphs and they are always the worst ones. This is because they have a similar iteration mechanism. Both will be synchronized after each iteration.  Fig. \ref{fig:iteS1} shows the number of iterations for graphs with a similar topology. Our \emph{Contour} method is still the best one. We can clearly see that the number of iterations of \emph{C-1} is almost increasing in linear with the graph sizes. However, the number of iterations of all the other methods increases very slowly because their number of iterations is approximately proportional to $log n$, where $n$ is the total number of vertices.

\emph{Speedup Compared with FastSV}

Fig. \ref{fig:speedupRS1} shows the speedup of our \emph{Contour} algorithm and its variants compared with the state-of-the-art \emph{FastSV} method. For all the graphs, our \emph{Contour} algorithm can achieve 5.26 folds of speedup on average. The maximum speedup is 9.17 for the synthetic graph rgg\_n\_2\_22\_s0. The minimum speedup is 2.68 for the real-world graph loc-brightkite\_edges.  We can see the performance of \emph{C-1} can be very high for low-diameter graphs. But it can also be very low for high-diameter graphs. So, the performance of \emph{C-1} is not stable. \emph{C-2} is very good for most of cases because it can balance the low-diameter and high-diameter graphs.  \emph{C-S} is always better than \emph{C-CAS} because removing the loop  operation will save time. This also means that our suggested hardware-supported atomic \emph{CAA} can perform better than \emph{C-CAS}. So, the performance trend is \emph{Contour $>$  C-S $>$  C-CAS $>$ C-Syn $>$ FastSV}, where $a>b$ means that the performance of $a$ is better than that of $b$. Compared with \emph{C-1} and \emph{C-2}, \emph{Contour} is not always the best, but its performance is very stable for most of the cases on shared memory and distributed memory environments.

Our \emph{Contour} method is much better than the \emph{C-Syn} method. It means that synchronization can really affect performance. In the \emph{FastSV} method, synchronization is necessary to keep the semantics of their tree-hooking method. However, in our minimum-mapping method, synchronization is not necessary. A vertex can be mapped to a much smaller label at any time, and asynchronous updates can accelerate the convergence of the algorithm. This is why our asynchronous  method can achieve better performance. We can also see that \emph{C-Syn} is worse than the \emph{C-S} and \emph{C-CAS}  methods. This result also indicates that the algorithm will have a much larger cost to maintain the semantics of the tree-hooking method. This is why our minimum-mapping-based algorithm mechanism can achieve better performance. 

Fig. \ref{fig:speedupS1} shows that for graphs with a similar topology, when we increase the size of the graph, our \emph{Contour} algorithm can achieve better speedup compared with \emph{FastSV}. \emph{C-2}, \emph{C-S} and \emph{C-CAS} have the similar trend. But the speedup of \emph{C-1} will reduce because, for larger graphs, the diameter will also increase. This is why \emph{C-1} can not achieve better speedup for similar topology but larger graphs. 


\subsubsection{Comparison with the \emph{BFS} method}
We also implement the \emph{BFS}-based connected components method in Arachne and compare our \emph{Contour} algorithm with the widely used \emph{BFS} method. Here both \emph{Contour} and \emph{BFS} are running on one locale using data set 2. Fig. \ref{fig:bfsites1} gives the total number of iteration steps of the \emph{BFS} method for the same graphs. We can see that they are much larger than the total number of iterations of our \emph{Contour} algorithm, especially for the large diameter graphs. For most of the graphs, \emph{Contour} can converge in 4 iterations. However, BFS will need much more iterations (the largest number of iterations is 1510 for \emph{BFS}). Fig. \ref{fig:bfspeedups1} gives the speedup of our \emph{Contour} method compared with the \emph{BFS} method. We can see that the speedup will increase with graph size at first and then become flat and then drop. The average speedup is 196. The maximum speedup is 320 for delaunay\_n19. The minimum speedup is 30 for delaunay\_n10.

\begin{figure}
    \centering
    \includegraphics[width=.99\linewidth]{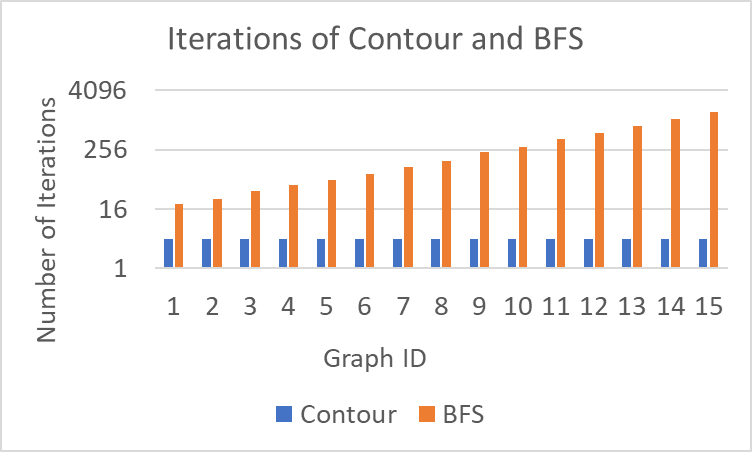}
    \caption{Total number of search steps of \emph{BFS} method for graphs with similar topology (one locale).}
    \label{fig:bfsites1}
\end{figure}

\begin{figure}
    \centering
    \includegraphics[width=.99\linewidth]{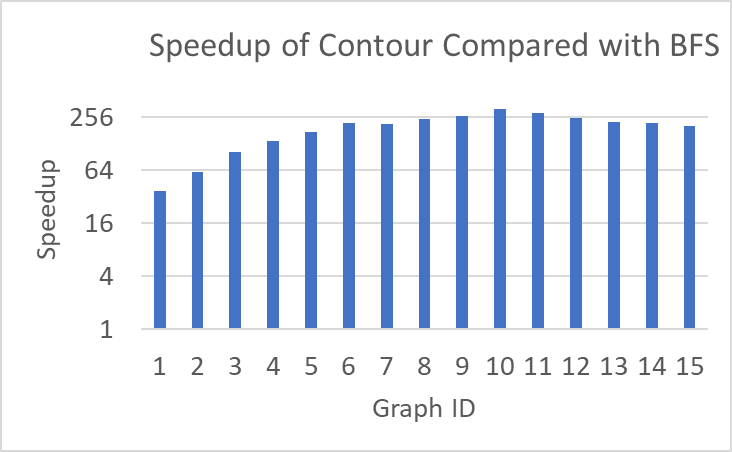}
    \caption{The speedup of \emph{Contour} algorithm compared with the \emph{BFS} algorithm for graphs with similar topology (one locale).}
    \label{fig:bfspeedups1}
\end{figure}

\begin{figure*}
    \centering
    \includegraphics[width=.99\linewidth]{1LocalesSpeedup2.png}
    \caption{One Locale Speedup 2}
    \label{fig:1LocaleSpeedup2}
\end{figure*}

, we do not run the graphs whose execution times are longer than one hour. Here, we give the results of data set 2 on 4 locales to show how distributed memory and graph size can affect the results. 

The trend of the results on 4 locales in Fig. \ref{fig:iterS4} is similar to that of the one locale results in 
Fig. \ref{fig:iteRS1}. For the \emph{C-1} method, its number of iterations is the largest. However, the performance results shown in Fig. \ref{fig:speedupS4} that \emph{C-1} is very good for graphs whose diameters (graph sizes) are not very large, even better than the \emph{Contour} version. Only when the graph diameter (graph size) becomes very large will its performance drop. The reason is that the major cost for a distributed memory graph algorithm is communication instead of computation. The \emph{C-1} method can exploit the local memory well. However, all the other methods will involve more communication relatively. This is why \emph{C-1} can achieve much better performance when the graph diameter (graph size) is not large. Of course, when the diameter becomes larger, the communication and computation of \emph{C-1} will increase significantly. This is why its performance drops quickly for the large graphs from  delaunay\_n20 to delaunay\_n24. Our experimental results show that the absolute performance of distributed memory version is much slower than the corresponding shared memory version.

\fi

\end{document}